\documentclass[11pt]{article}

\usepackage{times}

\def\colorful{0}

\oddsidemargin=-0.1in \evensidemargin=-0.1in \topmargin=-.5in
\textheight=9in \textwidth=6.5in
\parindent=18pt

\usepackage{amsthm,amsfonts,amsmath,amssymb,epsfig,color,float,graphicx,verbatim, enumitem}
\usepackage{multirow}
\usepackage[ruled,vlined]{algorithm2e}

\newif\ifhyper\IfFileExists{hyperref.sty}{\hypertrue}{\hyperfalse}
\hypertrue
\ifhyper\usepackage{hyperref}\fi

\usepackage{enumitem}


\usepackage{framed}
\usepackage{nicefrac}

\def\nnewcolor{0}
\ifnum\nnewcolor=1
\newcommand{\nnew}[1]{{\color{red} #1}}
\fi
\ifnum\nnewcolor=0
\newcommand{\nnew}[1]{#1}
\fi

\ifnum\colorful=1
\newcommand{\new}[1]{{\color{red} #1}}

\else
\newcommand{\new}[1]{{#1}}

\fi

\newtheorem{theorem}{Theorem}[section]

\newtheorem{lemma}[theorem]{Lemma}
\newtheorem{informal theorem}[theorem]{Theorem (informal statement)}

\newtheorem{claim}[theorem]{Claim}
\newtheorem{fact}[theorem]{Fact}

\theoremstyle{definition}
\newtheorem{definition}[theorem]{Definition}
\newcommand{\eqdef}{\stackrel{{\mathrm {\footnotesize def}}}{=}}

\newcommand{\mini}[1]{\mbox{minimize} & {#1} &\\}
\newcommand{\maxi}[1]{\mbox{maximize} & {#1 } & \\}
\newcommand{\st}{\mbox{subject to} }
\newcommand{\con}[1]{&#1 & \\}

\newenvironment{lp}{\begin{equation}  \begin{array}{lll}}{\end{array}\end{equation}}
\newenvironment{lp*}{\begin{equation*}  \begin{array}{lll}}{\end{array}\end{equation*}}

\newcommand{\dd}{\mathrm{d}}

\newcommand{\diam}{\mathrm{diam}}

\newcommand{\R}{\mathbb{R}}

\newcommand{\Z}{\mathbb{Z}}

\newcommand{\E}{\mathbf{E}}
\newcommand{\eps}{\epsilon}
\newcommand{\dtv}{d_{\mathrm TV}}

\newcommand{\poly}{\mathrm{poly}}


\newcommand{\KL}{\mathrm{KL}}

\newcommand{\D}{\mathbf{D}}

\newcommand{\mle}{\mathop{\widehat{f}_n}}

\newcommand{\vol}{\mathrm{vol}}

\newcommand{\ch}{\mathrm{Conv}}

\title{A Polynomial Time Algorithm for Maximum Likelihood Estimation of Multivariate Log-concave Densities}

\author{
Ilias Diakonikolas\thanks{Supported by NSF Award CCF-1652862 (CAREER) and a Sloan Research Fellowship.}\\
University of Southern California\\
{\tt diakonik@usc.edu}\\
\and
Anastasios Sidiropoulos\thanks{Supported by NSF Award  CCF-1453472 (CAREER) and NSF grants CCF-1423230 and CCF-1815145.}\\
University of Illinois at Chicago\\
{\tt sidiropo@uic.edu}\\
\and
Alistair Stewart\\ University of Southern California\\
{\tt stewart.al@gmail.com}
}

\begin{document}

\maketitle




\begin{abstract}
We study the problem of computing the maximum likelihood estimator (MLE) of multivariate log-concave densities.
Our main result is the first computationally efficient algorithm for this problem. In more detail, 
we give an algorithm that, on input a set of $n$ points in $\R^d$ and an accuracy parameter $\eps>0$, 
it runs in time $\poly(n, d, 1/\eps)$, and outputs a log-concave density that with high probability 
maximizes the log-likelihood up to an additive $\eps$. Our approach relies on a natural convex optimization 
formulation of the underlying problem that can be efficiently solved by a projected stochastic subgradient method. 
The main challenge lies in showing that a stochastic subgradient of our objective function can be efficiently approximated. 
To achieve this, we rely on structural results on approximation of log-concave densities and 
leverage classical algorithmic tools on volume approximation of convex bodies and uniform sampling 
from convex sets.
\end{abstract}

\thispagestyle{empty}
\setcounter{page}{0}

\newpage

\section{Introduction} \label{sec:intro}

This paper is concerned with the problem of computing the maximum likelihood estimator of multivariate
log-concave densities. Before we state our results,
we provide some background and motivation.

\subsection{Background}

A distribution on $\R^d$ is log-concave if the logarithm 
of its probability density function is concave. More formally, 
we have the following definition:

\begin{definition}[Log-concave Density] \label{def:lc}
A probability density function $f : \R^d \to \R_+$, $d \in \Z_+$, is called {\em log-concave}
if there exists an upper semi-continuous concave function $\phi: \R^d \to [-\infty, \infty)$
such that $f(x) = e^{\phi(x)}$ for all $x \in \R^d$.
We will denote by $\mathcal{F}_d$ the set of upper semi-continuous,
log-concave densities with respect to the Lebesgue
measure on $\R^d$.
\end{definition}

Log-concave densities form a broad nonparametric family
encompassing a wide range of fundamental distributions, including
the uniform, normal, exponential, logistic, extreme value,
Laplace, Weibull, Gamma, Chi and Chi-Squared, and Beta distributions (see, e.g.,~\cite{BagnoliBergstrom05}).
Log-concave probability measures have been extensively investigated in several scientific disciplines, 
including economics, probability theory and statistics, computer science, 
and geometry (see, e.g.,~\cite{Stanley:89, An:95, LV07, Walther09, SW14-survey}). 
The problem of {\em density estimation} for log-concave distributions is of central importance
in the area of non-parametric estimation (see, e.g., ~\cite{Walther09, SW14-survey, Sam17-survey})
and has received significant attention during the past decade in statistics~\cite{CSS10, 
DumbgenRufibach:09, DossW16, ChenSam13, KimSam16, BalDoss14, HW16} 
and theoretical computer science~\cite{CDSS13, CDSS14, ADLS17, CanonneDGR16, DKS16-proper-lc,
DiakonikolasKS17-lc, CDSS18}.

In this work, we focus on the problem of computing the maximum likelihood estimator (MLE)
of a multivariate log-concave density. Formally, we study the following algorithmic question: 
\begin{center}
{\em Is there a computationally efficient algorithm to compute the log-concave MLE on $\R^d$?}
\end{center}
We believe that obtaining an understanding of the above algorithmic 
question is of interest for a number of reasons. 
First, the log-concave MLE is {\em the} prototypical statistical estimator for the class, is fully automatic
(in contrast, e.g., to kernel-based estimators), and has an intriguing geometry~\cite{CSS10, RSU17}. 
Hence, it is of fundamental theoretical and practical importance to understand 
whether it is efficiently computable in multiple dimensions. 
Computing the log-concave MLE is desirable for additional reasons, in particular 
because it satisfies a number of useful properties 
that may not be satisfied by surrogate estimators (see, e.g.,~\cite{Sam17-survey}). 
Finally, we note that an efficient algorithm to compute the MLE 
would yield the first efficient {\em proper} learning algorithm for the class of multivariate log-concave densities.

Recent work~\cite{KimSam16, CDSS18} has nearly characterized 
the rate of convergence (aka sample complexity) of the log-concave MLE with respect 
to the squared Hellinger loss. However, the question of efficient computability has remained open.
Cule, Samworth, and Stewart~\cite{CSS10} (see also~\cite{CuleS10}) 
established several key structural properties of the multivariate log-concave MLE
and proposed a convex formulation to find it. Alas, the work~\cite{CSS10} does not give a polynomial 
time algorithm to compute it. We note that the approach in~\cite{CSS10} 
can be shown to imply an algorithm with runtime $\poly(n^d)$ for our problem,
where $n$ is the sample size and $d$ is the dimension. 
As we explain in Section~\ref{sec:results}, this upper bound is tight for their algorithm, i.e., 
the dependence on the dimension $d$ is inherently exponential. This exponential dependence 
is illustrated in the experimental evaluation 
of the iterative method proposed in~\cite{CSS10}, which does not seem to scale in dimensions more than $4$. 
Recent work by Rathke and Schn{\"o}rr~\cite{RS18} proposed 
a non-convex optimization approach to the problem of computing the log-concave MLE, 
which seems to exhibit faster practical runtimes (scaling to $6$ or higher dimensions). 
Unfortunately however, their method is of a heuristic nature, in the sense that there is no 
guarantee that their solution will converge to the log-concave MLE.


\subsection{Our Results and Techniques} \label{sec:results}

The main result of this paper is an efficient algorithm to
compute the multivariate log-concave MLE.
For concreteness, we formally define the log-concave MLE:

\begin{definition}[Log-concave MLE] \label{def:mle}
Let $x^{(1)}, \ldots, x^{(n)} \in \R^d$.
The log-concave MLE, $\mle = \mle(x^{(1)}, \ldots, x^{(n)})$, 
is the density $\mle \in \mathcal{F}_d$ which maximizes the log-likelihood 
$\ell(f) \eqdef  \sum_{i=1}^n \ln(f(x^{(i)}))$ 
over $f \in \mathcal{F}_d$.
\end{definition}

\noindent As shown in~\cite{CSS10}, 
the log-concave MLE $\mle(x_1, \ldots, x_n)$ exists and is unique. 
Our main result is the first efficient algorithm to compute the log-concave MLE
up to any desired accuracy. 

Before we state our results, we require some terminology:
An evaluation oracle for a distribution with density $f$ is an efficient algorithm
to compute $f(x)$ at any given point $x$. An $\eps$-sampler for a distribution with density 
$f$ is an efficient algorithm that outputs a sample from a distribution with total variation
distance at most $\eps$ from $f$. Our main result is the following:

\begin{theorem}[Main Result] \label{thm:main}
Fix $d \in \Z_+$ and $0<\eps, \tau<1$. There is an algorithm that, on input a set of points $x^{(1)}, \ldots, x^{(n)}$ 
on $\R^d$, and parameters $\eps$, $\tau$, 
runs in $\poly(n, d, 1/\eps, \log(1/\tau))$ time and with probability at least $1-\tau$ 
outputs a (succinct description of a) log-concave density $h^{\ast} \in \mathcal{F}_d$ such that 
$\ell(h^{\ast}) \geq \ell(\mle)-\eps.$ Specifically, the succinct description of our output hypothesis 
admits a $\poly(n, d)$ time evaluation oracle and a $\poly(n, d, 1/\eps)$ 
time $\eps$-sampler for the underlying distribution.
\end{theorem}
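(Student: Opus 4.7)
The plan is to formulate the log-concave MLE problem as a convex minimization over $\R^n$ via the tent-function parameterization of Cule--Samworth--Stewart, and then solve it by projected stochastic subgradient descent. For $y \in \R^n$, let $\bar y : \ch(x^{(1)}, \ldots, x^{(n)}) \to \R$ denote the piecewise-linear concave tent function determined by the constraints $\bar y(x^{(i)}) \le y_i$, extended by $-\infty$ outside the convex hull $K \eqdef \ch(x^{(1)}, \ldots, x^{(n)})$. The MLE has the form $e^{\bar y^*}$ where $y^*$ minimizes the convex objective
\[
\Psi(y) \eqdef -\frac{1}{n}\sum_{i=1}^n y_i + \int_{\R^d} e^{\bar y(x)}\,dx,
\]
and the first-order condition at $y^*$ forces $\int e^{\bar y^*} = 1$. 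A straightforward renormalization argument (subtracting from $\hat y$ the scalar $\log \int e^{\bar{\hat y}}$ only decreases $\Psi$) shows that an iterate $\hat y$ with $\Psi(\hat y)-\Psi(y^*)\le \eps/n$ yields, after this shift, a density $h^*$ with $\ell(h^*)\ge \ell(\mle)-\eps$.

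To run projected SGD on $\Psi$, I first apply an affine preprocessing to place $K$ in near-isotropic position; this changes $\ell$ only by a known additive constant. Known structural bounds on the log-concave MLE imply that $y^*$ lies in a box $\mathcal{B}=[-M,M]^n$ with $M=\poly(n,d)$, onto which projection is coordinatewise trivial. The key formula for the subgradient is
\[
\partial_i \Psi(y) \;=\; -\tfrac{1}{n} \,+\, Z_y \cdot \E_{X \sim P_y}\!\bigl[\lambda_i(X)\bigr],
\]
where $Z_y\eqdef\int_K e^{\bar y}$, the distribution $P_y$ is log-concave with density proportional to $e^{\bar y}$ on $K$, and $\lambda_i(x)\ge 0$ are the barycentric coefficients of $x$ in the active simplex of the tent function's upper-envelope triangulation at $x$ (computable by a single linear program). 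A stochastic subgradient is therefore obtained by drawing one approximate sample $X$ from $P_y$ via classical log-concave samplers of Lov\'asz--Vempala type on $K$, computing $\lambda_i(X)$ by LP, and multiplying by a multiplicative estimate of $Z_y$ obtained by simulated-annealing integration of log-concave potentials.

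The main obstacle, as the abstract signals, is to show that this stochastic subgradient is both approximately unbiased and of polynomially bounded second moment, uniformly over $y\in\mathcal B$. Two approximation errors must be controlled: the sampler outputs $X$ from a distribution only TV-close to $P_y$, and the annealing routine returns a multiplicative approximation of $Z_y$. The plan is to invoke the structural approximation results cited in the abstract to replace $e^{\bar y}$, if needed, by a sufficiently well-conditioned surrogate log-concave potential so that the Lov\'asz--Vempala mixing and annealing guarantees apply with $\poly(n,d,1/\eps)$ cost, and to argue that the restriction $y\in\mathcal{B}$ already controls the Lipschitz constant of $\bar y$. The variance of the stochastic gradient is well behaved because $\sum_i\lambda_i(X)=1$ pointwise, so its squared Euclidean norm is $O(Z_y^2)=O(1)$ near the optimum and $\poly(n,d)$-bounded on all of $\mathcal{B}$. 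Combined with the $O(M\sqrt{n})$ diameter of $\mathcal{B}$, standard analysis of projected SGD for convex $L$-Lipschitz objectives yields an iteration count of $\poly(n,d,1/\eps)$ to attain optimality gap $\eps/n$, which translates into $\ell(h^*)\ge\ell(\mle)-\eps$ by the renormalization argument above.

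For the succinct output, the final iterate $\hat y$ represents $h^* = e^{\bar{\hat y}}$; evaluation at any $x$ is done by computing $\bar{\hat y}(x)$ through its defining LP, and an $\eps$-sampler is a single invocation of the log-concave sampler on $K$ with potential $\bar{\hat y}$. I expect the most delicate step to be the careful propagation of sampling and integration errors through the SGD iterates, in particular verifying that the bias in $\partial_i \Psi(y)$ introduced by the TV-error of the log-concave sampler and the multiplicative slack in the volume-style estimate of $Z_y$ can be made $\poly(1/(n,d,1/\eps))$-small without blowing up the running time, even in regions of $\mathcal{B}$ where $P_y$ concentrates near a low-dimensional face of $K$.
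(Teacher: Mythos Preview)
Your plan has the right high-level architecture (tent parameterization, projected SGD, log-concave sampling for stochastic subgradients), but there is a genuine gap in the choice of objective that breaks the Lipschitz/variance bound you assert.

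You optimize the Cule--Samworth--Stewart objective
\[
\Psi(y) \;=\; -\tfrac{1}{n}\sum_i y_i \;+\; \int_K e^{\bar y(x)}\,dx,
\]
whose subgradient indeed carries the factor $Z_y=\int_K e^{\bar y}$. You then claim that the squared norm of the stochastic subgradient is $O(Z_y^2)$ and that this is ``$\poly(n,d)$-bounded on all of $\mathcal B$'' where $\mathcal B=[-M,M]^n$ with $M=\poly(n,d)$. That is false: at $y=M\cdot\mathbf 1$ one has $\bar y\equiv M$ on $K$ and hence $Z_y=e^{M}\vol(K)$, which is exponential in $M$. So the Lipschitz constant of $\Psi$ over your box is exponential in $\poly(n,d)$, and the SGD iteration bound blows up. (Restricting to $y_i\le 0$ does not by itself save you either, since you then need an a priori polynomial bound on $\vol(K)$ after your preprocessing, which you have not established, and in any case you still pay for estimating $Z_y$ at every step.)

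The paper avoids exactly this issue by replacing the second term with its logarithm,
\[
F(y) \;=\; -\tfrac{1}{n}\sum_i y_i \;+\; \ln\!\int_{S_n} e^{h_y(x)}\,dx,
\]
which is shift-invariant in $y$ and whose subgradient is $(-1/n)\mathbf 1 + \E_{X\sim \mathbf D_y}[\partial_y h_y(X)]$ with \emph{no} $Z_y$ factor. Since $\partial_y h_y(X)\in\Delta_n$ pointwise, the stochastic subgradient has $\ell_2$ norm at most $2$ uniformly on the box $[-B,0]^n$ (their Claim~3.8), and no normalizing-constant estimation is needed inside the loop; a single estimate of $Z_y$ at the end suffices to output a density. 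Adopting this logarithmic objective is the missing idea that makes your SGD analysis go through with polynomial parameters.
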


Recall that the squared Hellinger loss between two distributions with densities 
$f, g: \R^d \to \R_+$ is $h^2(f, g) = (1/2) \cdot \int_{\R^d} ( \sqrt{f(x)} - \sqrt{g(x)})^2 dx$.
Combined with the known rate of convergence of the log-concave MLE with respect to 
the squared Hellinger loss, Theorem~3 in~\cite{CDSS18}, Theorem~\ref{thm:main} implies the following:

\begin{theorem} \label{thm:global-loss}
Fix $d \in \Z_+$ and $0<\eps, \tau<1$. Let $n = \nnew{\Omega} \left( (d^2/\eps) \ln^3(d/(\eps\tau))  \right)^{(d+3)/2}$. 
There is an algorithm that, given $n$ iid samples from an unknown log-concave density $f_0 \in  \mathcal{F}_d$,
runs in $\poly(n)$ time and outputs a log-concave density $h^{\ast} \in \mathcal{F}_d$ such that
with probability at least $1-\tau$, we have that $h^2(h^{\ast}, f_0) \leq \eps$.
\end{theorem}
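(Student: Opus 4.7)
The plan is to combine Theorem~\ref{thm:main} with the sharp rate of convergence of the log-concave MLE under squared Hellinger loss established in Theorem~3 of~\cite{CDSS18}. For the prescribed sample size $n$, the latter result guarantees that, on $n$ i.i.d.\ samples $x^{(1)},\dots,x^{(n)} \sim f_0$, the MLE satisfies $h^2(\mle, f_0) \leq \eps/4$ with probability at least $1-\tau/2$. I then apply the algorithm of Theorem~\ref{thm:main} to the same sample with accuracy parameter $\eps' = \eps/2$ and failure probability $\tau/2$ to obtain, in $\poly(n, d, 1/\eps, \log(1/\tau)) = \poly(n)$ time, a density $h^{\ast} \in \mathcal{F}_d$ with $\ell(h^{\ast}) \geq \ell(\mle) - \eps/2$.

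The main work is to translate this additive log-likelihood guarantee into a Hellinger bound between $h^{\ast}$ and $\mle$. I would argue by introducing the normalized geometric mean
\[
g(x) \eqdef \frac{\sqrt{\mle(x)\, h^{\ast}(x)}}{\int_{\R^d} \sqrt{\mle(y)\, h^{\ast}(y)}\, dy}.
\]
Because $\ln \mle$ and $\ln h^{\ast}$ are both concave, so is $\ln g$ (up to an additive constant), and hence $g \in \mathcal{F}_d$. Applying the optimality of $\mle$ to $g$ yields $\ell(\mle) \geq \ell(g) = \tfrac{1}{2}(\ell(\mle) + \ell(h^{\ast})) - n \ln \int \sqrt{\mle h^{\ast}}$, which rearranges to
\[
\int_{\R^d} \sqrt{\mle(x)\, h^{\ast}(x)}\, dx \;\geq\; \exp\!\left( -\tfrac{1}{2n}\bigl(\ell(\mle) - \ell(h^{\ast})\bigr) \right) \;\geq\; 1 - \tfrac{\eps}{4n}.
\]
Hence $h^2(h^{\ast}, \mle) = 1 - \int \sqrt{\mle h^{\ast}} \leq \eps/(4n)$.

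Finally, the triangle inequality for Hellinger distance yields $h^2(h^{\ast}, f_0) \leq 2\, h^2(h^{\ast}, \mle) + 2\, h^2(\mle, f_0) \leq \eps/(2n) + \eps/2 \leq \eps$, and a union bound over the two failure events gives overall failure probability at most $\tau$. The main obstacle is precisely the middle step: a generic additive likelihood gap need not imply closeness in a useful distributional distance, but the geometric-mean construction---which crucially exploits that log-concavity is preserved under pointwise products---converts an $\eps/2$ log-likelihood gap into an $O(\eps/n)$ Hellinger gap, more than enough to absorb into the MLE's own $\eps/4$ Hellinger error.
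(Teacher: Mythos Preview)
Your argument is correct and supplies the details the paper omits: the paper's own ``proof'' is only the sentence preceding the theorem, citing Theorem~\ref{thm:main} together with Theorem~3 of~\cite{CDSS18}. Your geometric-mean step---comparing $\mle$ against the log-concave competitor $g \propto \sqrt{\mle\, h^{\ast}}$ to convert an additive log-likelihood gap of $\eps/2$ into a squared-Hellinger gap of $O(\eps/n)$---is precisely the standard device that underlies such MLE rate-of-convergence results, so your route is the natural one. If Theorem~3 of~\cite{CDSS18} is already stated for approximate log-likelihood maximizers (as these results often are), your middle step is redundant but harmless; if it is stated only for the exact MLE, your step is exactly what is needed and the paper has simply suppressed it.
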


We note that Theorem~\ref{thm:global-loss} yields the first efficient proper learning algorithm
for multivariate log-concave densities under a global loss function.

\paragraph{Technical Overview} 

Here we provide a brief overview of our algorithm and its analysis 
in tandem with a comparison to prior work.
Our algorithm proceeds by convex optimization: We formulate the problem of computing
the log-concave MLE of a set of $n$ samples as a convex optimization problem that
we solve via an appropriate first-order method. The main difficulty is that we do not have 
direct access to the (sub-)gradients of the objective function and the naive algorithm
to compute a subgradient at a point takes exponential time. Hence, the key challenge
is how to obtain an efficient algorithm for this task. One of our main contributions is 
a randomized polynomial time algorithm to approximately compute a subgradient of the objective function.
Our algorithm for this task leverages structural results on log-concave densities established in~\cite{CDSS18} 
combined with classical algorithmic results on approximating the volume of convex bodies 
and uniformly sampling from convex sets~\cite{kannan1997random, lovasz2006simulated, lovasz2006hit}.


In more detail, our algorithmic approach leverages a key structural property
of the log-concave MLE, shown in~\cite{CSS10}: The logarithm of the log-concave MLE
$\ln \mle$, is a ``tent'' function, whose parameters are the values $y_1, \ldots, y_n$ 
of the log density at the $n$ input samples $x^{(1)}, \ldots, x^{(n)}$. 
Our convex programming formulation
is very similar, but not identical, to the formulation proposed in~\cite{CSS10}.
Specifically, we seek to maximize the log-likelihood of the probability density function 
obtained by normalizing the \new{log-concave function} whose logarithm 
is the convex hull of the log densities at the samples. 
This objective function is a concave function of the parameters, 
so we end up with a (non-differentiable) convex optimization problem. The crucial observation is that 
the subgradient of this objective at a given point $y$ 
is given by an expectation under the current hypothesis density at $y$.

Given such a convex formulation, we would like to use a first-order method to efficiently 
find an $\eps$-approximate optimum. We note that the objective function is not differentiable everywhere, 
hence we need to work with subgradients. We show that the subgradient of the objective function 
is bounded in $\ell_2$-norm at each point, i.e., the objective function is Lipschitz. 
Another important lemma is that we can restrict the domain of our optimization problem
to a compact convex set of appropriately bounded diameter $D = \poly(n, d)$. 
This is crucial for us, as it allows us to bound the number of iterations of a first order method. 
Given the above, we can in principle use a projected subgradient method to find 
an approximate optimum to our optimization problem,
i.e., find a log-concave density whose log-likelihood is $\eps$-optimal.

It remains to describe how we can efficiently compute a subgradient of our objective function.
Note that the log density \new{of our hypothesis} can be considered as an unbounded convex polytope. 
The previous approach to calculate the subgradient in~\cite{CSS10} 
relied on decomposing this polytope into faces and obtaining a closed form 
for the underlying integral over these faces (that gives their contribution to the subgradient). 
However, this convex polytope is given by $n$ vertices in $d$ dimensions, and 
therefore its number of faces can be $n^{\Omega(d)}$. So, such an algorithm 
cannot run in polynomial time. 

Instead, we note that we can use a linear program to evaluate a function proportional 
to the hypothesis density at a point in time polynomial in $n$ and $d$. To use this oracle 
for the density in order to produce samples from the hypothesis density, 
we use Markov Chain Monte Carlo (MCMC) methods. In particular, we use MCMC 
to draw samples from the uniform distribution on super-level sets and estimate their volumes. 
With appropriate rejection sampling, we can use these samples to obtain 
samples from a distribution that is close to the hypothesis density.  

Since the subgradient of the objective can be expressed as an expectation over this density, 
we can use these samples to sample from a distribution whose expectation is close to a subgradient. 
We then use projected stochastic subgradient descent to find an approximately optimal solution 
to the convex optimization problem. The hypothesis density this method outputs 
has log-likelihood close to the maximum.




\subsection{Related Work} \label{ssec:related}

The general task of estimating a probability distribution under  
qualitative assumptions about the {\em shape} of its probability density function
has a long history in statistics, dating back to the pioneering work of 
Grenander~\cite{Grenander:56} 
who analyzed the maximum likelihood estimator of a univariate monotone density. 
Since then, shape constrained density estimation has been an active research area 
with a rich literature in mathematical statistics and, more recently, in computer science. 
The reader is referred to~\cite{BBBB:72} for a summary of the early work and to~\cite{GJ:14} 
for a recent book on the subject.

The standard method used in statistics for density estimation problems of this form 
is the MLE. See~\cite{Brunk:58, PrakasaRao:69, Wegman:70, 
HansonP:76, Groeneboom:85, Birge:87, Birge:87b,Fougeres:97,ChanTong:04,BW07aos, JW:09, 
DumbgenRufibach:09, BRW:09aos, GW09sc, BW10sn, KoenkerM:10aos, Walther09, 
ChenSam13, KimSam16, BalDoss14, HW16, CDSS18} for a partial list of works analyzing the MLE for various distribution families.
During the past decade, there has been a body of algorithmic work on shape constrained density estimation 
in computer science with a focus on both sample and computational efficiency~\cite{DDS12soda, DDS12stoc, DDOST13focs, CDSS13, CDSS14, CDSS14b, ADHLS15, ADLS17, DKS15, DKS15b, DDKT15, DKS16, DiakonikolasKS17-lc, DiakonikolasLS18}.

Density estimation of log-concave densities has been extensively investigated.
The univariate case is by now well understood~\cite{DL:01, CDSS14, ADLS17, KimSam16, HW16}. 
For example, it is known~\cite{KimSam16, HW16} that $\Theta(\eps^{-5/4})$ samples
are necessary and sufficient to learn an arbitrary log-concave density over $\R$ within squared Hellinger loss $\eps$.
Moreover, the MLE is sample-efficient~\cite{KimSam16, HW16} 
and attains certain adaptivity properties~\cite{KGS16}.  
A line of work in computer science~\cite{CDSS13, CDSS14, ADLS17, CanonneDGR16, DKS16-proper-lc}
gave sample and computationally efficient algorithms for univariate log-concave 
density estimation under the total variation distance.

For the multivariate case, a line of work~\cite{CSS10, DumbgenRufibach:09, DossW16, ChenSam13, BalDoss14} 
has obtained a complete understanding of the global consistency properties of the MLE for any dimension $d$.
Regarding finite sample bounds, Kim and Samworth~\cite{KimSam16} gave 
a sample complexity {\em lower bound} of $\Omega_d \left( (1/\eps)^{(d+1)/2} \right)$ for $d \in \Z_+$ 
that applies to {\em any} estimator, and a near-optimal 
sample complexity {\em upper bound} for the log-concave MLE for $d \leq 3$. 
Diakonikolas, Kane, and Stewart~\cite{DiakonikolasKS17-lc} 
established the first finite sample complexity upper bound for learning multivariate 
log-concave densities under global loss functions. 
Specifically, they proposed an estimator that learns any log-concave density on $\R^d$ 
within squared Hellinger loss $\eps$ with $\tilde{O}_d \left( (1/\eps)^{(d+5)/2} \right)$ samples. 
It should be noted that the estimator proposed in~\cite{DiakonikolasKS17-lc}  
is very different than the log-concave MLE and seems hard 
to compute in multiple dimensions. In recent work, Carpenter {\em et al.}~\cite{CDSS18} 
showed a sample complexity upper bound of $\tilde{O}_d \left( (1/\eps)^{(d+3)/2} \right)$ 
for the multivariate log-concave MLE with respect to squared Hellinger loss, thus obtaining 
the first finite sample complexity upper bound for this estimator in dimension $d\geq 4$. 
Alas, the computational complexity of the log-concave MLE has remained open in
the multivariate case.

Finally, we note that a recent work~\cite{DLS19} obtained a non-proper estimator 
for multivariate log-concave densities with sample complexity $\tilde{O}_d((1/\eps)^{d+2})$ 
and runtime $\tilde{O}_d((1/\eps)^{2d+2})$.

\paragraph{Independent Work.} Contemporaneous work by Axelrod and Valiant~\cite{AV18} 
gives a $\poly(n, d, r, 1/\eps)$ time algorithm to compute an $\eps$-approximation log-concave MLE, 
where $r$ is a parameter bounded by the $\ell_2$-norm of the log-likelihoods of the input points 
$x_1, \ldots, x_n$ under $\mle$.

\subsection{Organization}
After setting up the required preliminaries in Section~\ref{sec:prelims}, in 
Section~\ref{sec:algo} we present our algorithm and an overview of its analysis, 
modulo the efficient sampling procedure we require.
In Section~\ref{sec:sampling}, we describe and analyze our sampling procedure.
Finally, we conclude with a few open problems in Section~\ref{sec:conc}.



\section{Preliminaries} \label{sec:prelims}

\noindent {\bf Notation and Definitions.}
For $m \in \Z_+$, we denote $[m] \eqdef \{1, \ldots, m\}$.
Let $\Delta_n \eqdef \{z = (z_1, \ldots, z_n) \in \R^n: z_i \geq 0, \sum_{i=1}^n z_i = 1 \}$ 
denote the probability simplex on $\R^n$. We will use $\vol(A)$ to denote
the volume of a set $A \subseteq \R^n$, with respect to Lebesgue measure.
A Lebesgue measurable function $f: \R^d \to \R$ is a probability density function (pdf)
if $f(x) \geq 0$ for all $x \in \R^d$ and  $\int_{\R^d} f(x) dx = 1$.

Let $f, g: \R^d \to \R_+$ be probability density functions.
The {\em total variation distance} between $f, g: \R^d \to \R_+$ 
is defined as $\dtv(f, g) = \sup_{S} |f(S) - g(S)|$, where
the supremum is over all Lebesgue measurable subsets of the domain.
We have that $\dtv\left(f, g \right) = (1/2) \cdot \| f -g  \|_1 = (1/2) \cdot \int_{\R^d} |f(x) - g(x)| dx.$
The {\em squared Hellinger distance} between $f, g: \R^d \to \R_+$ is defined as 
$h^2(f, g) = (1/2) \cdot \int_{\R^d} ( \sqrt{f(x)} - \sqrt{g(x)})^2 dx$.
The  {\em Kullback-Leibler (KL) divergence from $g$ to $f$} is defined as
$\KL(f || g) = \int_{-\infty}^{\infty} f(x) \ln \frac{f(x)}{g(x)} dx$.

For $y \in \R_+$ and $f: \R^d \to \R_+$
we denote by $L_f(y) \eqdef \{x \in \R^d \mid f(x) \geq y\}$
its {\em superlevel sets}. If $f$ is log-concave, $L_f(y)$ is a convex set for all $y \in \R_+$.
For a function $f: \R^d \to \R_+$, we will denote by $M_f$ its maximum value.

\smallskip




\section{Log-concave MLE via Convex Optimization} \label{sec:algo}

In this section, we prove Theorem~\ref{thm:main}. At a high-level, we 
first formulate the problem of computing the log-concave 
MLE as a constrained convex optimization problem, and then
show that the latter problem can be solved efficiently with first order methods. 
The main technical challenge is that the subgradients of our convex
objective are not directly accessible, but can be efficiently approximated
using MCMC methods. 

Given a set of $n$ datapoints $x^{(1)}, \ldots, x^{(n)}$ in $\R^d$, we want to compute
a log-concave density that (nearly) maximizes the log-likelihood of the data. A basic
property of the log-concave MLE $\mle$ (see, e.g., Theorem ~2 of~\cite{CSS10} 
or Lemma~15 in~\cite{CDSS18}) is that it is supported on the convex hull of the samples, which we will denote by 
$S_n \eqdef \mathrm{Conv}(\{x^{(i)}\}_{i=1}^n)$. Our approach crucially uses the fact, established in Theorem~2 
of~\cite{CSS10}, that the logarithm of the log-concave MLE, 
$\ln \mle$, is a {\em tent function}, defined as follows:

\begin{definition}[Tent Function] \label{def:tent}
For $y = (y_1, \ldots, y_n) \in \R^n$ and a set of points $x^{(1)}, \ldots, x^{(n)}$ in $\R^d$, 
we define the function $h_y: \R^d \to \R$ as follows: 
\[
h_y(x) = \left\{\begin{array}{ll}
\max \{ z \in \R \textrm{ such that } (x, z) \in \ch(\{(x^{(i)}, y_i)\}_{i=1}^n)\} & \text{ if } x\in S_n\\
 -\infty & \text{ if } x\notin S_n
 \end{array}\right.
\]
\end{definition}

We observe a few basic facts about tent functions. We have that
 $ \min_i  y_i \leq  h_y(x) \leq \max_i y_i$ for all $x \in S_n$.
We note that, for each $y \in \R^n$, $h_y(x)$ is concave as a function of $x$.
Moreover, for fixed $x \in \R^d$, $h_y(x)$ is convex 
(in fact, positive homogeneous, i.e.,~for all $\alpha>0$, $h_{\alpha y}(x)=\alpha h_y(x)$), as a function of $y$.

A basic routine for our overall algorithm is that 
for any $y \in \R^n$, we can efficiently compute the value $h_y(x)$ and 
a subgradient $\partial_y h_y(x)$, for any given $x \in \R^d$. This is established in the 
following simple lemma:

\begin{lemma} \label{lem:h-subgradient-algo}
For any fixed $y \in \R^n$, there is a $\poly(n, d)$ time algorithm that on input
$x \in \R^d$ computes $h_y(x)$ and a subgradient $\partial_y h_y(x)$.
\end{lemma}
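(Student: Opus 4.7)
The plan is to reduce both tasks---evaluating $h_y(x)$ and extracting a subgradient in $y$---to a single linear program in the barycentric coordinates $\lambda \in \R^n$. By Definition~\ref{def:tent}, a point $(x,z)$ lies in $\ch(\{(x^{(i)},y_i)\}_{i=1}^n)$ iff it admits a representation $(x,z) = \sum_i \lambda_i (x^{(i)}, y_i)$ with $\lambda$ in the probability simplex (Carath\'eodory), so maximizing $z$ over such representations yields
\[
h_y(x) \;=\; \max_{\lambda \in \R^n} \sum_{i=1}^n \lambda_i y_i \quad \text{s.t.} \quad \sum_{i=1}^n \lambda_i x^{(i)} = x, \;\; \sum_{i=1}^n \lambda_i = 1, \;\; \lambda_i \geq 0.
\]
This LP has $n$ variables and $O(n+d)$ constraints. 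Infeasibility is equivalent to $x \notin S_n$, in which case the algorithm returns $-\infty$; otherwise a standard polynomial-time LP solver (interior point or ellipsoid with the usual bit-complexity bounds) returns both the optimal value $h_y(x)$ and an optimal basic feasible solution $\lambda^\ast$ in $\poly(n,d)$ time.

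To produce the subgradient in $y$, I would observe that the feasible polytope $\Lambda(x) := \{\lambda \geq 0 : \sum_i \lambda_i x^{(i)} = x, \sum_i \lambda_i = 1\}$ depends only on $x$, not on $y$. Consequently, as a function of $y$, $h_y(x) = \max_{\lambda \in \Lambda(x)} \langle \lambda, y \rangle$ is the support function of $\Lambda(x)$, i.e., a pointwise maximum of linear functionals of $y$, and hence convex in $y$ (consistent with the positive homogeneity noted after Definition~\ref{def:tent}). By the standard subgradient formula for maxima of linear functionals (a special case of Danskin's theorem), any optimizer $\lambda^\ast$ of the LP above satisfies $\lambda^\ast \in \partial_y h_y(x)$, so returning the same $\lambda^\ast$ obtained from the LP solve yields a valid subgradient at no additional cost.

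I do not expect a substantive obstacle here: the hardest point is merely the numerical one of solving the LP to enough precision that the returned $\lambda^\ast$ is a rational vector of polynomial bit-complexity, and this is handled by the standard analysis of polynomial-time LP algorithms. Everything else---the reduction, the equivalence with $x \in S_n$, and the identification of $\lambda^\ast$ as a subgradient---is immediate from the definitions.
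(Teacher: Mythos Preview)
Your proposal is correct and essentially identical to the paper's proof: both reduce to the same LP over the simplex $\Delta_n$ with the constraint $\sum_i \lambda_i x^{(i)} = x$, and both identify an optimal $\lambda^\ast$ as a subgradient by noting that the feasible set is independent of $y$. The only cosmetic difference is that you phrase the subgradient claim via the support-function/Danskin viewpoint, whereas the paper verifies the subgradient inequality $h_{y'}(x) \geq h_y(x) + \lambda^\ast \cdot (y'-y)$ directly from feasibility of $\lambda^\ast$ for any $y'$.
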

\begin{proof}
By Definition~\ref{def:tent}, the value $h_y(x)$ is the optimal objective function value
of the following packing linear program:
\begin{lp} \label{eqn:packing-lp}
\maxi{a \cdot y}
\st \con{a \in \Delta_n, \sum_{i=1}^n a_i x^{(i)} = x}
\end{lp}
Therefore, $h_y(x)$ can be computed in $\poly(n, d)$ time, completing 
the first part of the lemma. 

To establish the second part, 
we note that the set of optimal solutions $a \in \R^n$ to \eqref{eqn:packing-lp} 
corresponds to the set of subgradients $\partial_y h_y(x)$ at point $x$. 
Indeed, if $a^{\ast} \in \R^n$ is an optimal solution to \eqref{eqn:packing-lp}  for $(x,y)$, 
then if we consider the same packing LP corresponding to  $y' \in \R^n$, $y' \neq y$, 
then $a^{\ast}$ is still a feasible solution (since $y$ only appears in the objective), 
and so the optimal value is at least that at $a^{\ast}$. Thus, we have 
$h_{y'}(x) \geq a^{\ast} \cdot y' = h_y(x) + a^{\ast} \cdot (y'-y)$, 
i.e., $a^{\ast}$ is a subgradient of $h_y(x)$ with respect to $y$.
This completes the second part of the lemma. 
In summary, the LP \eqref{eqn:packing-lp} allows us to compute $h_y(x)$ 
and a sub-gradient of $h_y$ at point $x$ in $\poly(n, d)$ time.
\end{proof}

At this point, we establish an important structural property of the log-concave MLE
that will inform our subsequent optimization formulation.
In particular, we show that the ratio between the largest and smallest
values of $\mle$ on the input points $x^{(i)}$ is bounded from above:

\begin{lemma}\label{lem:mle-ratio}
Let $x^{(1)}, \ldots, x^{(n)}$ be a set of points in $\R^d$ and $\mle$ be the corresponding
log-concave MLE. Then, we have that 
\begin{equation} \label{eqn:ratio}
R \eqdef \frac{\max_{i \in [n]}\mle(x^{(i)})}{\min_{i \in [n]}\mle(x^{(i)})} \leq (2 n d)^{2 n d}\;.
\end{equation}
\end{lemma}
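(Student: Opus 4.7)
The plan is to bound $\ln R = \hat{y}_{\max} - \hat{y}_{\min}$ (where $\hat{y}_i \eqdef \ln \mle(x^{(i)})$) by combining a concavity-based volume argument with an optimality-based perturbation. Since the convex objective $L(y) = \sum_i y_i - n \ln \int_{S_n} e^{h_y(x)}\,dx$ is translation invariant in $y$, I would normalize so that $\int_{S_n} e^{h_{\hat{y}}(x)}\,dx = 1$; and by affine invariance of $R$ (under linear maps of the $x^{(i)}$), I would also take $\vol(S_n) = 1$.

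\emph{Upper bound on $\hat{y}_{\max}$.} From $\int e^{h_{\hat{y}}} = 1$ one immediately obtains $\vol(\{h_{\hat{y}} \geq t\}) \leq e^{-t}$ for every $t$; in particular at $t = \hat{y}_{\max} - 1$. On the other hand, by concavity of $h_{\hat{y}}$, this superlevel set contains the simplex with vertex $x^{(k)}$ (the sample realizing $\hat{y}_{\max}$) and vertices $v_j \eqdef x^{(k)} + (x^{(i_j)} - x^{(k)})/\max(1, \hat{y}_{\max} - \hat{y}_{i_j})$ for any $d$ other samples $x^{(i_1)}, \ldots, x^{(i_d)}$; its volume equals $\vol(\ch(x^{(k)}, x^{(i_1)}, \ldots, x^{(i_d)})) / \prod_j \max(1, \hat{y}_{\max} - \hat{y}_{i_j})$. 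By a star triangulation of $S_n$ from $x^{(k)}$ into at most $n^d$ simplices (one per boundary $(d-1)$-face of $S_n$), pigeonhole lets one choose the $d$ samples to make $\vol(\ch(x^{(k)}, x^{(i_1)}, \ldots, x^{(i_d)})) \geq 1/n^d$. Combined with $\max(1, \hat{y}_{\max} - \hat{y}_{i_j}) \leq \ln R$, this yields $e^{1 - \hat{y}_{\max}} \geq 1/(n^d (\ln R)^d)$, i.e.\ $\hat{y}_{\max} \leq 1 + d \ln n + d \ln \ln R$.

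\emph{Range bound via a large perturbation.} Let $j_0$ be the index of $\hat{y}_{\min}$ and define $y'$ by $y'_{j_0} = \hat{y}_{\max}$ and $y'_i = \hat{y}_i$ otherwise. All coordinates of $y'$ are at most $\hat{y}_{\max}$, so $h_{y'} \leq \hat{y}_{\max}$ pointwise and $\int e^{h_{y'}} \leq e^{\hat{y}_{\max}}\vol(S_n) = e^{\hat{y}_{\max}}$. The optimality of $\hat{y}$ gives $L(\hat{y}) \geq L(y')$, which (using $\sum_i y'_i - \sum_i \hat{y}_i = \ln R$) simplifies to $\ln R \leq n \hat{y}_{\max}$. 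Combining with the previous display yields the transcendental inequality $\ln R \leq n(1 + d \ln n + d \ln \ln R)$, whose solution gives $\ln R = O(nd \ln(nd))$, implying $R \leq (2nd)^{2nd}$ after an explicit constant analysis.

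The main subtlety is the star-triangulation/volume argument in the first step — producing a subset of $\leq n^d$ simplices of $S_n$ all sharing vertex $x^{(k)}$ and showing one has volume at least $1/n^d$ by pigeonhole — together with pinning down the constants in the transcendental inequality so that the bound comes out exactly as $(2nd)^{2nd}$. Both are standard but require care.
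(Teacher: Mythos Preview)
Your two-step skeleton is the same as the paper's: first bound $\hat{y}_{\max}$ (equivalently $MV$) in terms of $\ln R$, then bound $\ln R$ in terms of $\hat{y}_{\max}$ using optimality, and solve the resulting transcendental inequality. But the execution of the first step differs substantially and costs you the constant.

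\textbf{First step.} The paper does not build a simplex inside a high superlevel set. It simply observes that the \emph{entire} convex hull $S_n$ sits inside the superlevel set $\{\mle\ge M/R\}$ (since the tent function is $\ge \hat y_{\min}$ on all of $S_n$), and then invokes the standard log-concave bound $\vol(\{\mle\ge Me^{-w}\})\le w^d/M$ (their Fact~\ref{fact:vol}). This gives $V\le(\ln R)^d/M$ with no $n$-dependence. Your star-triangulation/pigeonhole argument is more elementary (it avoids Fact~\ref{fact:vol}) but pays an extra $n^d$, yielding $\hat y_{\max}\le 1+d\ln n+d\ln\ln R$ instead of the paper's $\hat y_{\max}\le d\ln\ln R$ (after your normalizations).

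\textbf{Second step.} Your perturbation $L(\hat y)\ge L(y')$ with the crude bound $\int e^{h_{y'}}\le e^{\hat y_{\max}}\vol(S_n)$ is equivalent to the paper's comparison $\ell(\mle)\ge\ell(U_{S_n})$ together with $\ell(\mle)\le n\ln M-\ln R$; both deliver $\ln R\le n\hat y_{\max}$ (i.e.\ $R^{1/n}\le MV$).

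\textbf{Constant.} Combining the paper's two pieces gives the clean $R\le(\ln R)^{nd}$, from which $(2nd)^{2nd}$ follows via $\ln R<2nd\,R^{1/(2nd)}$. Your combination gives $\ln R\le n+nd\ln n+nd\ln\ln R$, and this does \emph{not} force $\ln R\le 2nd\ln(2nd)$: already at $d=1,\,n=2$ it allows $\ln R\approx 7.5$ while $2nd\ln(2nd)\approx 5.55$. So the ``explicit constant analysis'' you flag cannot recover the stated bound without replacing the triangulation step by the inclusion $S_n\subseteq\{\mle\ge M/R\}$ plus Fact~\ref{fact:vol}. Your argument does give $\ln R=O(nd\ln(nd))$, which is all the paper actually uses downstream.
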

\begin{proof}
Let $V = \vol(S_n)$ be the volume of the convex hull of the sample points
and $M =  \max_{x} \mle(x)$ be the maximum pdf value of the MLE. 
By basic properties of the log-concave MLE (see, e.g., Theorem~2 of~\cite{CSS10}), we have that
$\mle(x) >0$ for all $x \in S_n$ and $\mle(x)=0$ for all $x \not \in S_n$.
Moreover, by the definition of a tent function, it follows that $\mle$ attains its global maximum value and its global non-zero positive value
in one of the points $x^{(i)}$.

We can assume without loss of generality that $\mle$ is not the uniform distribution 
on $S_n$, since otherwise $R = 1$ and the lemma follows.
Under this assumption, we have that $R > 1$ or $\ln R > 0$, which implies that $M > 1/V$. 
The following fact bounds the volume of upper level sets 
of any log-concave density:
\begin{fact}[see, e.g., Lemma~8 in ~\cite{CDSS18}] \label{fact:vol}
Let $f \in \mathcal{F}_d$ with maximum value $M_f$. Then for all $w >0$, we have
$\vol(L_{f}(M_{f} e^{-w})) \leq \nnew{{w}^d / M_{f}}$.
\end{fact}
By Fact~\ref{fact:vol} applied to the MLE $\mle$, for $w = \ln R$, we get that
$\vol(L_{\mle}(M/R)) \leq (\ln R)^d/ M$. Since the pdf value of $\mle$ at 
any point in the convex hull $S_n$ is at least that of the smallest sample point 
$x^{(i)}$, i.e., $M/R$, it follows that $S_n$ is contained in $L_{\mle}(M/R)$.
Therefore, 
\begin{equation} \label{eqn:volume-ub}
V \leq (\ln R)^d/M \;. 
\end{equation} 
On the other hand, the log-likelihood of $\mle$
is at least the log-likelihood of the uniform distribution $U_{S_n}$ on $S_n$. 
Since at least one sample point $x^{(i)}$ has pdf value $\mle(x^{(i)}) = M/R$ 
and the other $n-1$ sample points have pdf value $\mle(x^{(i)}) \leq M$, 
we have that 
$$\ln(M/R) + (n-1) \ln M \geq \ell(\mle) \geq \ell(U_{S_n}) = n \ln (1/V) \;,$$
or $n \ln M - \ln R \geq - n \ln V$, 
and therefore $\ln (M V) \geq (\ln R)/n$. 
This gives that
\begin{equation} \label{eqn:volume-lb}
R^{1/n} \leq M V  \;.
\end{equation}
Combining \eqref{eqn:volume-ub} and \eqref{eqn:volume-lb} gives
\begin{equation} \label{eqn:R}
R \leq (\ln R)^{n d}  \;.
\end{equation}
Since $\ln x < x$, $x \in \R$,  setting $x= R^{1 \over{2 n d}}$ gives that
$\ln R < 2 n d \cdot R^{1 \over{2 n d}}$ or 
\begin{equation} \label{eqn:lnR}
(\ln R)^{n d} < (2 n d)^{n d} \cdot R^{1/2}\;.
\end{equation}
By \eqref{eqn:R} and \eqref{eqn:lnR} we deduce that
$R \leq (2 n d)^{n d} \cdot R^{1/2}$ or 
$$R \leq (2 n d)^{2 n d} \;.$$
This completes the proof of Lemma~\ref{lem:mle-ratio}.
\end{proof}

We introduce some notation that will appear throughout this section. 
Let 
\begin{equation} \label{eqn:logR}
B \eqdef 2 n d \cdot \ln (2 n d) = \Theta (n d \log(n d)) \;,
\end{equation}
where it should be noted that $B \geq \ln R$ (by Lemma~\ref{lem:mle-ratio}),
and let $\mathcal{C}$ denote the closed convex set
\begin{equation} \label{eqn:C}
\mathcal{C} \eqdef [-B, 0]^n \;.
\end{equation}
Our optimization formulation to compute the log-concave MLE is as follows:
\begin{lp} \label{eqn:mle-convex}
\mini{F(y) \eqdef  - (\sum_{i=1}^n y_i)/n  + \ln(\int_{x \in S_n} \exp(h_y(x)) \dd x)}
\st \con{y \in \mathcal{C}.}
\end{lp}
Some comments are in order: First, observe that 
$\int_{x \in S_n} \exp(h_y(x)) \dd x) = \int_{x \in \R^d} \exp(h_y(x)) \dd x)$, since $\exp(h_y(x))$
is identically zero outside $S_n$. We note that
the objective function $F$ is similar, but not identical,
to the objective function used in the convex formulation of~\cite{CSS10}. In particular,~\cite{CSS10}
uses the function $\widetilde{F}(y) = - (\sum_{i=1}^n y_i)/n  + \int_{x \in S_n} \exp(h_y(x)) \dd x$,
i.e., without the logarithmic factor in the second term. (While it is not important for our proofs in 
this section, we note that $\widetilde{F}$ automatically guarantees that the optimal solution is a probability density
function, while an optimal solution to \eqref{eqn:mle-convex} 
is a log-concave positive measure that needs to be normalized to give our log-concave density.) 
Second, the feasible set of \eqref{eqn:mle-convex}  is an $\ell_{\infty}$-ball
of relatively small radius (at most $\sqrt{n} B = \poly(n, d)$). It turns out that the polynomial upper 
bound on the radius will be important for us, as the number of iterations of any first order method 
scales linearly with this quantity. As is established formally in Lemma~\ref{lem:correct-formulation}, 
the fact that optimizing over $\mathcal{C}$ 
suffices for our purposes critically relies on Lemma~\ref{lem:mle-ratio}.

The following lemma shows that \eqref{eqn:mle-convex} is a convex optimization problem
and that any near-optimal solution nearly maximizes the log-likelihood of the data:

\begin{lemma} \label{lem:correct-formulation}
We have the following:
\begin{itemize}
\item[(i)] The function $F: \R^n \to \R$ in \eqref{eqn:mle-convex} is convex. 
\item[(ii)] Let $y \in \mathcal{C}$ be such that $F(y)  \leq \min_{y \in \mathcal{C}} F(y)  + \eps/n$.
Then, the log-concave density $H_y(x) = \exp(\widehat{h}_y(x)) = \exp(h_y(x))/\int_{S_n}\exp(h_y(x)) \dd x$
satisfies $\ell(H_y) \geq \ell(\mle)-\eps.$
\end{itemize}
\end{lemma}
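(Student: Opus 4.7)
The plan is to handle the two parts of the lemma separately, with part~(i) being a standard log-convexity argument and part~(ii) reducing the problem to identifying a suitable feasible point corresponding to the MLE itself.

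For part~(i), the first term $-(\sum_i y_i)/n$ is linear in $y$, hence convex, so it suffices to show that $G(y) \eqdef \ln \int_{S_n} \exp(h_y(x))\,\dd x$ is convex. The paper already noted that $h_y(x)$ is convex in $y$ for each fixed $x$ (indeed, it is the maximum of the linear functionals $a\cdot y$ over the feasible set of the LP in Lemma~\ref{lem:h-subgradient-algo}). Thus for $\lambda\in[0,1]$ and $y^{(0)},y^{(1)}\in\R^n$,
\[
\exp\bigl(h_{\lambda y^{(1)} + (1-\lambda) y^{(0)}}(x)\bigr) \leq \exp\bigl(h_{y^{(1)}}(x)\bigr)^{\lambda}\exp\bigl(h_{y^{(0)}}(x)\bigr)^{1-\lambda}.
\]
Integrating over $S_n$ and applying Hölder's inequality (with conjugate exponents $1/\lambda$ and $1/(1-\lambda)$), the integral on the left is bounded by the product $\bigl(\int \exp(h_{y^{(1)}})\bigr)^{\lambda}\bigl(\int \exp(h_{y^{(0)}})\bigr)^{1-\lambda}$; taking logarithms gives $G(\lambda y^{(1)}+(1-\lambda) y^{(0)}) \leq \lambda G(y^{(1)}) + (1-\lambda) G(y^{(0)})$.

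For part~(ii), the key identity is that, writing $Z(y) = \int_{S_n} \exp(h_y(x))\,\dd x$, the normalized density is $H_y(x) = \exp(h_y(x))/Z(y)$, so
\[
\ell(H_y) = \sum_{i=1}^n h_y(x^{(i)}) - n\ln Z(y).
\]
Since $x^{(i)}$ is trivially expressible as itself (take $a=e_i$ in~\eqref{eqn:packing-lp}), $h_y(x^{(i)}) \geq y_i$, hence $\ell(H_y) \geq \sum_i y_i - n\ln Z(y) = -nF(y)$. It remains to show $\min_{y\in\mathcal{C}} F(y) \leq -\ell(\mle)/n$, which I will establish by exhibiting a single point of $\mathcal{C}$ achieving this value. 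Let $\tilde y_i^\ast \eqdef \ln \mle(x^{(i)})$. By the structural characterization of the MLE from~\cite{CSS10} quoted before Definition~\ref{def:tent}, $\ln \mle$ agrees with the tent function $h_{\tilde y^\ast}$ on $S_n$, so $Z(\tilde y^\ast) = \int_{S_n} \mle(x)\,\dd x = 1$ and therefore $F(\tilde y^\ast) = -(\sum_i \tilde y_i^\ast)/n = -\ell(\mle)/n$.

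The one missing piece is that $\tilde y^\ast$ may not lie in $\mathcal{C} = [-B,0]^n$. For this I will use that $F$ is invariant under the shift $y\mapsto y+c\mathbf{1}$: by positive homogeneity of $h_y(x)$ in $y$ and linearity of the LP objective, $h_{y+c\mathbf{1}}(x) = h_y(x)+c$, so $\ln Z(y+c\mathbf{1}) = c+\ln Z(y)$, which cancels the shift $-c$ in the first term of $F$. Choosing $c = -\max_i \tilde y_i^\ast$, the shifted vector $y^\ast \eqdef \tilde y^\ast + c\mathbf{1}$ satisfies $\max_i y_i^\ast = 0$ and $\min_i y_i^\ast \geq -\ln R \geq -B$ by Lemma~\ref{lem:mle-ratio} and the definition~\eqref{eqn:logR} of $B$. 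Hence $y^\ast \in \mathcal{C}$ and $F(y^\ast) = -\ell(\mle)/n$. Combining the inequalities, any $y\in\mathcal{C}$ with $F(y) \leq \min_{\mathcal{C}} F + \eps/n \leq -\ell(\mle)/n + \eps/n$ yields $\ell(H_y) \geq -nF(y) \geq \ell(\mle) - \eps$, as required. The main conceptual step is recognizing that shift-invariance of $F$ combined with Lemma~\ref{lem:mle-ratio} lets us confine the optimization to the compact box $\mathcal{C}$ without losing the MLE as an optimum.
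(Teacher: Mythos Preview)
Your proof is correct and follows essentially the same approach as the paper's: both use that $h_y(x^{(i)})\geq y_i$ to relate $F(y)$ to $\ell(H_y)$, invoke the tent-function structure of $\mle$ together with Lemma~\ref{lem:mle-ratio} to place (a shift of) the MLE's parameter vector inside $\mathcal{C}$, and combine these to pass from $\eps/n$-optimality in $F$ to $\eps$-optimality in log-likelihood. Your presentation is slightly more streamlined---you invoke the shift-invariance $F(y+c\mathbf{1})=F(y)$ directly and argue the inequality $\ell(H_y)\geq -nF(y)$ in one line, whereas the paper routes the comparison through the auxiliary formulations \eqref{eqn:mle-convex2}--\eqref{eqn:mle-convex3}---but the substance is the same.
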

\begin{proof}
Part (i) of the lemma follows by iteratively applying known operations that preserve convexity of a function.
Since a sum of convex functions is convex (see, e.g., page 79 of ~\cite{boyd2}), 
it suffices to show that the function $G(y) =  \ln(\int_{S_n} \exp(h_y(x)) \dd x)$ is convex.
Since $h_y(x)$ is a convex function of $y$, by definition, $\exp(h_y(x))$ 
is log-convex as a function of $y$. Since an integral of log-convex functions 
is log-convex (see, e.g., page 106 of ~\cite{boyd2}), it follows that $\int_{S_n} \exp(h_y(x)) \dd x$ is log-convex. 
Therefore, $G$ is convex. We have therefore established that $F$ is convex, as desired.

We proceed to establish part (ii). We note that an optimal solution to 
\eqref{eqn:mle-convex} is not always a probability density function, 
in the sense that it does not necessarily integrate to $1$. 
(Hence, at the end of the optimization procedure, 
we renormalize the log-concave function we obtain.)

We know that $\mle(x) = \exp(h_{y^{\ast}}(x))$, for some $y^{\ast} \in \R^n$.
To prove part (ii), we will show that there exists some scaling of $\mle$
that is optimal for \eqref{eqn:mle-convex}. That is, there exists
$y' \in \mathcal{C}$ such that (a) the normalized version of $\exp(h_{y'}(x))$ 
is identified with $\mle$, and (b) $\exp(h_{y'}(x))$ is optimal for 
\eqref{eqn:mle-convex} . 

By Lemma~\ref{lem:mle-ratio}, $R \leq (2 n d)^{2 n d }$ 
is the ratio between the maximum value and the minimum non-zero value of $\mle$.
Therefore, if we let $y^{\ast} \in \R^n$ be such that $\mle(x) = \exp(h_{y^{\ast}}(x))$, 
then for any $i, j \in [n]$ we have that $|y^{\ast}_i-y^{\ast}_j| \leq \ln R$. 
Now if we set $y'_i = y^{\ast}_i - \max_{j \in [n]} y^{\ast}_j$, 
then we have that $-\ln R \leq y'_i \leq 0$, i.e., $y' \in \mathcal{C}$, where
we used the fact that $\ln R \leq B$. Note that 
$\exp(h_{y'}(x)) =  \exp(h_{y^{\ast}}(x)) \cdot \exp(-\max_j y^{\ast}_j)$, 
and so the normalization of $\exp(h_{y'}(x))$ is the MLE.
That is, $y'$ is a feasible solution to \eqref{eqn:mle-convex} and the corresponding log-concave
function $\exp(h_{y'}(x))$ is a scaling of $\mle$.

We now proceed to show optimality of $y'$. 
Note that \eqref{eqn:mle-convex} can be equivalently written 
as follows:
\begin{lp} \label{eqn:mle-convex2}
\maxi{(\sum_{i=1}^n y_i)/n  - \ln(\int_{x \in S_n} \exp(h_y(x)) \dd x)}
\st \con{y \in \mathcal{C}.}
\end{lp}
Since $h_y(x^{(i)}) \geq y_i$, \eqref{eqn:mle-convex} is a relaxation of the following:
\begin{lp} \label{eqn:mle-convex3}
\maxi{(\sum_{i=1}^n h_y(x^{(i)}))/n  - \ln(\int_{x \in S_n} \exp(h_y(x)) \dd x)}
\st \con{y \in \mathcal{C}.}
\end{lp}
Observe that the objective function of \eqref{eqn:mle-convex3}
is equal to the average log-likelihood of the normalized density obtained from 
$\exp(h_y(x))$. We note that $y'$ is an optimal solution to \eqref{eqn:mle-convex3}.
Indeed, since the objective function is scale-invariant, i.e., does not change if we multiply 
$y$ by any positive number, it follows that the maximum possible value is $(1/n) \ell(\mle)$;
and this value is attained by $y'$.

We now claim that \eqref{eqn:mle-convex2} and \eqref{eqn:mle-convex3} have the same
optimal solutions. Indeed, if $h_y(x^{(i)}) > y_i$, we can improve our objective by replacing 
$y_i$  with $h_y(x^{(i)})$, which does not change $h_y$, so the objective increases.
Therefore, an optimal solution must have $y_i=h_y(x^{(i)})$.
Similarly, if $y$ is an $\eps$-optimal solution, then setting $y_i=h_y(x^{(i)})$ can only 
improve the objective, and so the log-likelihood must be at least as high as the objective.
Since the objective function is equal to the average log-likelihood, it follows that 
an $\eps/n$-optimal solution to \eqref{eqn:mle-convex2}, or equivalently to 
\eqref{eqn:mle-convex}, is an $\eps$-optimal maximizer of the log-likelihood.
\end{proof}




We note that $F$ is not differentiable everywhere, 
hence we need to analyze its subgradients. The following lemma expresses the 
subgradients of $F$ as a (shifted) expectation of an efficiently
computable function with respect to a specific log-concave density:

\begin{lemma} \label{lem:subgradient}
For any $y \in \R^n$, we have that 
\begin{equation} \label{eqn:subgradient}
\partial_y F(y) = (-1/n, \ldots, -1/n) + \E_{X \sim \D_y}[\partial_y h_y(X)] \;,
\end{equation}
where $\D_{y}$ is the log-concave probability distribution defined 
by the log-concave density $\widehat{H}_y(x) = \exp(\widehat{h}_y(x))$
obtained by normalizing the log-concave function $\exp(h_y(x))$.
\end{lemma}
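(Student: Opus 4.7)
The plan is to split $F=F_1+F_2$, where $F_1(y)=-(1/n)\sum_{i=1}^n y_i$ is linear and contributes the deterministic term $(-1/n,\ldots,-1/n)$ to every subgradient, and $F_2(y)=G(y)\eqdef \ln\!\int_{S_n}\exp(h_y(x))\,\dd x$. The whole content of the lemma is therefore to show that for any measurable selection $g(x)\in \partial_y h_y(x)$ (guaranteed to exist since $y\mapsto h_y(x)$ is a convex function on $\R^n$ with a nonempty subdifferential at every point), the vector $v\eqdef \E_{X\sim\D_y}[g(X)]$ lies in $\partial G(y)$. Equivalently, I would like to verify the subgradient inequality
\[
G(y')-G(y)\ \geq\ v\cdot(y'-y)\qquad \text{for all } y'\in\R^n.
\]

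The key step is a two-line calculation combining Jensen's inequality with the pointwise subgradient property of $h_\cdot(x)$. First, rewrite
\[
G(y')-G(y)\ =\ \ln\!\frac{\int_{S_n}\exp(h_{y'}(x))\,\dd x}{\int_{S_n}\exp(h_y(x))\,\dd x}\ =\ \ln \E_{X\sim\D_y}\!\bigl[\exp\!\bigl(h_{y'}(X)-h_y(X)\bigr)\bigr],
\]
where the second equality uses the definition $\widehat H_y(x)=\exp(h_y(x))/\int_{S_n}\exp(h_y(x))\,\dd x$. Since $\ln$ is concave, Jensen gives $G(y')-G(y)\ \geq\ \E_{X\sim\D_y}[h_{y'}(X)-h_y(X)]$. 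Next, use that $y\mapsto h_y(x)$ is convex with $g(x)\in\partial_y h_y(x)$, so $h_{y'}(x)-h_y(x)\geq g(x)\cdot(y'-y)$ pointwise. Taking expectation over $X\sim\D_y$ yields $G(y')-G(y)\geq v\cdot(y'-y)$, which is exactly the subgradient inequality for $v$.

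Combining with the trivial contribution from $F_1$, I obtain that $(-1/n,\ldots,-1/n)+v\in \partial F(y)$, which is the claimed identity \eqref{eqn:subgradient} once we interpret $\partial_y h_y(X)$ as a measurable selection of subgradients. The pieces I should double-check but expect to be routine are: (a) that $g(\cdot)$ can be chosen measurably, which follows from the fact that $\partial_y h_y(x)$ is the set of optimal solutions to the LP \eqref{eqn:packing-lp} and hence a polyhedral set depending measurably on $x$; (b) that the integrals defining $\D_y$ and $v$ are finite, which follows because $h_y$ is bounded on the compact set $S_n$ (by $\min_i y_i\leq h_y(x)\leq \max_i y_i$) and $\partial_y h_y(x)\subset\Delta_n$ is bounded; and (c) interchanging integration and the subgradient inequality, which requires only dominated convergence type arguments since everything is bounded on $S_n$.

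The main (mild) obstacle is simply the fact that $h_y$ and hence $F$ are not everywhere differentiable in $y$, so one cannot directly differentiate under the integral sign. The Jensen-plus-pointwise-subgradient trick above sidesteps this cleanly, since it never requires differentiability of $h_y$ in $y$, only convexity, which we already have. No additional smoothness or boundedness assumptions on $y,y'$ are needed for the inequality itself, so the result holds for all $y\in\R^n$ as stated.
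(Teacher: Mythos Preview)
Your proof is correct and in fact more rigorous than the paper's own argument. Both proofs split $F$ into the linear part and $G(y)=\ln\int_{S_n}\exp(h_y(x))\,\dd x$, but they diverge in how they handle $G$.

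The paper computes $\partial_y G(y)$ by a formal ``chain rule for subgradients'': it writes $\partial_y \ln(\cdot)=(\partial_y\,\cdot)/(\cdot)$, pushes $\partial_y$ through the integral, and then applies $\partial_y \exp(h_y(x))=\exp(h_y(x))\,\partial_y h_y(x)$, arriving at the expectation formula. This calculation is suggestive but, as you implicitly recognize, subgradients do not in general obey the chain rule or commute with integration without additional justification; the paper does not supply that justification.

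Your route sidesteps all of this: you write $G(y')-G(y)=\ln\E_{X\sim\D_y}\bigl[\exp(h_{y'}(X)-h_y(X))\bigr]$, apply Jensen to get $\geq \E_{X\sim\D_y}[h_{y'}(X)-h_y(X)]$, and then use the pointwise subgradient inequality $h_{y'}(x)-h_y(x)\geq g(x)\cdot(y'-y)$. This directly verifies the defining inequality for a subgradient of $G$, never invoking differentiability, a chain rule, or differentiation under the integral. It is both more elementary and more complete; the boundedness checks you flag in (a)--(c) are indeed routine given $\partial_y h_y(x)\subset\Delta_n$ and the compactness of $S_n$. The only thing to note is that both arguments establish the containment direction (the displayed vector \emph{is} a subgradient), which is exactly what the algorithm needs.
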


\begin{proof}
Let $G(y) =  \ln(\int_{S_n} \exp(h_y(x)) \dd x)$.
By definition of $F$, we have that 
$$ \partial_y F(y) = (-1/n, \ldots, -1/n) + \partial_y G(y) \;. $$
By the chain rule for subgradients, we can write that
\begin{eqnarray*}
\partial_y G(y) &=& \partial_y \ln \left(\int_{S_n} \exp(h_y(x)) \dd x\right) \\
&=& \frac{\partial_y \int_{S_n} \exp(h_y(x)) \dd x}{\int_{S_n} \exp(h_y(x)) \dd x}  
= \frac{\int_{S_n} \partial_y \exp(h_y(x)) \dd x}{\int_{S_n} \exp(h_y(x)) \dd x} \\
&=& \frac{\int_{S_n} \exp(h_y(x)) \partial_y h_y(x) \dd x}{\int_{S_n} \exp(h_y(x)) \dd x}
= \int \exp(\widehat{h}_y(x)) \partial_y h_y(x) \dd x \\
&=& \E_{X \sim \D_y} [\partial_y h_y(X)] \;,
\end{eqnarray*}
where $\D_y$ is the distribution defined by $\exp(\widehat{h}_y(x))$, 
the normalization of $\exp(h_y(x))$.
\end{proof}

We note that formula~\eqref{eqn:subgradient} is crucial for our approach,
as it enables us to obtain an efficient randomized algorithm to 
approximate a stochastic subgradient of $G$ (and thus of $F$).
For the sake of completeness, before we proceed, 
we recall the necessary background from 
convex optimization that we require.

\paragraph{Convex Optimization Background.}
Consider the constrained convex optimization problem:
\begin{lp} \label{sgd}
\mini{F(y)}
\st \con{y \in \mathcal{C} \;,}
\end{lp}
where $F$ is convex and $\mathcal{C} \subset \R^n$ is a closed convex set.
Recall that a random vector $g \in \R^n$ is called a stochastic subgradient of $F$ at the point $y$ 
if $\E[g] \in \partial_y F(y)$. The stochastic projected subgradient method is as follows:
\begin{itemize}
\item Start from an arbitrary point $y^{(1)} \in \mathcal{C}$.
\item In the $k$-th iteration, $k \geq 1$, compute a stochastic subgradient $g^{(k)}$ at point $y^{(k)}$ and 
        move to the point $y^{(k+1)} := \pi_{\mathcal{C}}(y^{(k)} - a_k g^{(k)}) \;,$ 
        where $a_k$ is the step size and $\pi_{\mathcal{C}}$ is the projection operation on $\mathcal{C}$.
\end{itemize}
We will require a slight strengthening of the following standard result, 
see, e.g., Theorem 3.4.11 in~\cite{Duchi16}:
\begin{fact} \label{thm:sgd}
Let $\mathcal{C}$ be a compact convex set of diameter $\mathrm{diam}(\mathcal{C})<\infty$.
Suppose that  
the projections $\pi_{\mathcal{C}}$ are efficiently computable, and there exists $M <\infty$ such that 
for all $y \in \mathcal{C}$ we have that $\|g\|_2 \leq M$ for all stochastic subgradients.
Then, after $K = \Omega \left( M \cdot \mathrm{diam}(\mathcal{C}) \log(1/\tau)/\eps^2 \right)$ iterations 
of the projected stochastic subgradient method (for appropriate step sizes), 
with probability at least $1-\tau$, we have that 
$F\left(\bar{y}^{(K)}\right) - \min_{y \in \mathcal{C}}F(y) \leq \eps \;,$
where $\bar{y}^{(K)} = (1/K) \sum_{i=1}^K y^{(i)}$.
\end{fact}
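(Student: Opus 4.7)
The plan is to follow the standard analysis of projected stochastic subgradient descent, augmented with a martingale concentration argument to obtain the high-probability guarantee. Let $y^\ast \in \mathcal{C}$ be an optimum of $F$ over $\mathcal{C}$, let $\mathcal{F}_{k-1}$ denote the $\sigma$-algebra generated by $g^{(1)}, \ldots, g^{(k-1)}$, and let $D = \diam(\mathcal{C})$. I would first expand the squared distance to $y^\ast$ after one step, using that $\pi_{\mathcal{C}}$ is a non-expansion and that $y^\ast \in \mathcal{C}$:
\begin{equation*}
\|y^{(k+1)} - y^\ast\|_2^2 \leq \|y^{(k)} - a_k g^{(k)} - y^\ast\|_2^2 = \|y^{(k)} - y^\ast\|_2^2 - 2 a_k \langle g^{(k)}, y^{(k)} - y^\ast \rangle + a_k^2 \|g^{(k)}\|_2^2.
\end{equation*}
Rearranging, summing over $k = 1, \ldots, K$, and using $\|g^{(k)}\|_2 \leq M$ yields
\begin{equation*}
\sum_{k=1}^K a_k \langle g^{(k)}, y^{(k)} - y^\ast \rangle \leq \tfrac{1}{2} D^2 + \tfrac{1}{2} M^2 \sum_{k=1}^K a_k^2.
\end{equation*}

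Next I would split $\langle g^{(k)}, y^{(k)} - y^\ast \rangle = \langle \E[g^{(k)}\mid\mathcal{F}_{k-1}], y^{(k)} - y^\ast \rangle + Z_k$, where $Z_k \eqdef \langle g^{(k)} - \E[g^{(k)}\mid\mathcal{F}_{k-1}], y^{(k)} - y^\ast \rangle$ is a martingale difference adapted to $\{\mathcal{F}_k\}$. Since $\E[g^{(k)}\mid\mathcal{F}_{k-1}] \in \partial F(y^{(k)})$, convexity of $F$ gives $\langle \E[g^{(k)}\mid\mathcal{F}_{k-1}], y^{(k)} - y^\ast \rangle \geq F(y^{(k)}) - F(y^\ast)$. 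Taking uniform step sizes $a_k = a$ and applying Jensen's inequality to $\bar y^{(K)} = \frac{1}{K}\sum_{k=1}^K y^{(k)}$ yields
\begin{equation*}
F(\bar y^{(K)}) - F(y^\ast) \;\leq\; \frac{D^2}{2 a K} + \frac{a M^2}{2} + \frac{1}{K}\sum_{k=1}^{K} Z_k.
\end{equation*}

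For the high-probability control of $\sum_k Z_k$ I would invoke Azuma--Hoeffding: each $|Z_k| \leq \|g^{(k)} - \E[g^{(k)}\mid\mathcal{F}_{k-1}]\|_2 \cdot \|y^{(k)} - y^\ast\|_2 \leq 2MD$ almost surely, so
\begin{equation*}
\Pr\!\left[\, \sum_{k=1}^K Z_k \geq t \,\right] \leq \exp\!\left( - \frac{t^2}{8 K M^2 D^2} \right),
\end{equation*}
and setting $t = O(MD\sqrt{K \log(1/\tau)})$ makes the failure probability at most $\tau$. Balancing $D^2/(2aK)$ and $aM^2/2$ by choosing $a = D/(M\sqrt{K})$ gives a deterministic error of $O(MD/\sqrt{K})$, plus a stochastic error of $O(MD\sqrt{\log(1/\tau)/K})$. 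The total is $O(MD\sqrt{\log(1/\tau)/K})$, so taking $K = \Omega(M^2 D^2 \log(1/\tau)/\eps^2)$ iterations suffices to conclude $F(\bar y^{(K)}) - F(y^\ast) \leq \eps$ with probability at least $1 - \tau$, as required.

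The only non-routine ingredient is the martingale concentration step; the rest is the textbook telescoping calculation. The ``slight strengthening'' the authors mention is exactly this high-probability upgrade: the vanilla statement of Theorem 3.4.11 in~\cite{Duchi16} is typically phrased in expectation, and the extra $\log(1/\tau)$ factor in $K$ arises precisely from the Azuma tail bound on the martingale $\sum_k Z_k$. No changes are needed to the update rule or the step size schedule; one only has to verify that projection preserves the diameter bound $\|y^{(k)} - y^\ast\|_2 \leq D$ throughout the run, which follows inductively from $y^{(k)}, y^\ast \in \mathcal{C}$.
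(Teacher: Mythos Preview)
The paper does not supply its own proof of this statement; it is recorded as a Fact with a pointer to Theorem~3.4.11 in~\cite{Duchi16}. Your argument is the standard one---telescoping $\|y^{(k)}-y^\ast\|_2^2$, splitting off the martingale-difference terms $Z_k$, and applying Azuma--Hoeffding for the high-probability upgrade---and it is correct. This is precisely the proof the paper defers to the literature; the ``slight strengthening'' the authors flag is indeed the passage from the in-expectation guarantee to a $1-\tau$ probability guarantee, which is exactly what your Azuma step accomplishes.

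One small remark: your analysis produces $K = \Omega(M^2 D^2 \log(1/\tau)/\eps^2)$, whereas the Fact as written in the paper has $K = \Omega(M D \log(1/\tau)/\eps^2)$. Your bound is the one the standard argument actually gives; the statement in the paper appears to carry a minor misprint. In the paper's application $M=O(1)$ and $D=\poly(n,d)$, so the discrepancy is absorbed into the overall $\poly(n,d,1/\eps)$ running time and has no bearing on Theorem~\ref{thm:main}.
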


We note that Fact~\ref{thm:sgd} assumes that, in each iteration, we can efficiently calculate
an {\em unbiased} stochastic subgradient, i.e., a vector  $g^{(k)}$ such that 
$\E[g^{(k)}] \in \partial_y F(y^{(k)})$.
Unfortunately, this is not the case in our setting,
because we can only {\em approximately} sample from log-concave densities. However, 
it is straightforward to verify that the conclusion of Fact~\ref{thm:sgd} continues to hold
if in each iteration we can compute a random vector $\widetilde{g}^{(k)}$ such that 
$\|\E [\widetilde{g}^{(k)}] - g^{(k)}\|_2 < \delta \eqdef \eps/(2\mathrm{diam}(\mathcal{C}))$, 
for some $g^{(k)} \in \partial_y F(y^{(k)})$. 
This slight generalization is the basic algorithm we use in our setting.

\medskip

We now return to the problem at hand.
As a simple corollary of~\eqref{eqn:subgradient}, we deduce
that the stochastic subgradients of our objective function $F$ are bounded
in $\ell_2$-norm (i.e., $F$ is Lipschitz):

\begin{claim} \label{clm:lip}
The stochastic subgradients of $F: \R^n \to \R$ are bounded from above by $2$,
in $\ell_2$-norm. Therefore, $F$ is $2$-Lipschitz with respect to the $\ell_2$-norm, i.e., 
for all $y, y' \in \R^n$ it holds $|F(y) - F(y')| \leq 2 \|y-y'\|_2$.
\end{claim}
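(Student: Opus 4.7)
The plan is to use the explicit formula for the subgradient provided by Lemma~\ref{lem:subgradient}, together with the structural description of the subgradients of $h_y$ from Lemma~\ref{lem:h-subgradient-algo}, to bound the $\ell_2$-norm of any stochastic subgradient of $F$ by $2$. The Lipschitz conclusion will then follow from the standard fact that a convex function whose subgradients are uniformly bounded by $M$ in norm is $M$-Lipschitz.

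First I would recall that, by Lemma~\ref{lem:subgradient}, a stochastic subgradient of $F$ at $y$ is a random vector of the form
\[
\tilde g \;=\; \bigl(-1/n, \ldots, -1/n\bigr) + \partial_y h_y(X),
\]
where $X \sim \D_y$ and $\partial_y h_y(X)$ is any measurable selection of subgradients. The first summand has fixed $\ell_2$-norm equal to $1/\sqrt{n} \leq 1$, so the task reduces to bounding $\|\partial_y h_y(X)\|_2$ pointwise in $X$.

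Next I would invoke the LP description of $\partial_y h_y(x)$ from the proof of Lemma~\ref{lem:h-subgradient-algo}: every subgradient is an optimal solution $a^\ast \in \R^n$ to the packing LP~\eqref{eqn:packing-lp}, and the feasible region of this LP is contained in the probability simplex $\Delta_n$. Thus $a^\ast \geq 0$ componentwise and $\sum_i a^\ast_i = 1$, which gives $\|a^\ast\|_2 \leq \|a^\ast\|_1 = 1$. Combining this with the bound on the constant term via the triangle inequality yields $\|\tilde g\|_2 \leq 1 + 1 = 2$, as claimed.

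Finally, for the Lipschitz assertion, I would observe that any (deterministic) subgradient of $F$ at $y$ arises as the expectation $\E[\tilde g]$ over the random choice of $X \sim \D_y$; by Jensen's inequality this expectation also has $\ell_2$-norm at most $2$. Since $F$ is convex on $\R^n$ (Lemma~\ref{lem:correct-formulation}(i)) and every element of $\partial_y F(y)$ has norm at most $2$, a standard argument (applying the subgradient inequality at both $y$ and $y'$ and adding) yields $|F(y)-F(y')| \leq 2\|y-y'\|_2$ for all $y,y' \in \R^n$. There is no real obstacle here: the proof is essentially a one-line consequence of the simplex structure of subgradients of $h_y$ together with Lemma~\ref{lem:subgradient}; the only thing to be slightly careful about is noting that measurable selections exist so that $\partial_y h_y(X)$ makes sense as a random vector, which follows from standard measurable-selection theorems applied to the LP~\eqref{eqn:packing-lp}.
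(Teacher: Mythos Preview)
Your proposal is correct and follows essentially the same route as the paper: invoke Lemma~\ref{lem:subgradient} for the form of the stochastic subgradient, use the LP characterization from Lemma~\ref{lem:h-subgradient-algo} to see that $\partial_y h_y(x)\in\Delta_n$ and hence has $\ell_2$-norm at most $1$, and conclude via the triangle inequality. Your added remarks on Jensen's inequality and measurable selection are a touch more careful than the paper's version but do not change the argument.
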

\begin{proof}
By Lemma~\ref{lem:subgradient}
and the triangle inequality, for all $y \in \R^n$ we have that
a stochastic subgradient $g$ of $F$ at point $y$ is of the form
$$g(y) =  (-1/n, \ldots, -1/n) +\partial_y h_y(X)  \;,$$
where $X \sim \D_y$. By Lemma~\ref{lem:h-subgradient-algo}, 
any subgradient $\partial_y h_y(x)$ of $h_y$, for any $x \in \R^d$, 
is  an optimal solution to the LP \eqref{eqn:packing-lp}. It thus 
follows that $\partial_y h_y(x) \in \Delta_n$ and therefore that 
$\|\partial_y h_y(x)\|_2 \leq \|\partial_y h_y(x)\|_1 = 1$. 
That is, for all $y \in \R^n$ we have that $\sup_{x \in S_n} \| \partial_y h_y(x) \|_2 \leq  1$.
The triangle inequality now gives that for all $y \in \R^n$ we have that
$ \| g(y)  \|_2 \leq 1/\sqrt{n} + 1 \leq 2$.

Note that $F(y)$ is continuous as a composition of continuous functions. 
Since any stochastic subgradient of $F$ is bounded by $2$, in $\ell_2$-norm, 
it follows that any subgradient of $F$ is uniformly bounded by $2$.
That is, for all $y \in \R^n$ we have that $ \| \partial_y F(y) \|_2 \leq 2$.
This completes the proof of Claim~\ref{clm:lip}.
\end{proof}

A key technical ingredient of our method is a procedure for sampling 
from the normalization of $\exp(h_y)$, for a given vector $y\in \R^n$.
Furthermore, we need a procedure for computing the normalizing constant needed to obtain a pdf from $\exp(h_y)$.
Our algorithms for both of these tasks are summized in the following:

\begin{lemma}[Efficient Sampling]\label{lem:sampling}
There exist algorithms ${\cal A}_1$ and ${\cal A}_2$ satisfying the following:
Let $\delta, \tau>0$, and $R>1$.
Let $y\in [-\ln R,0]^n$, and let
$\phi$ be the probability density function such that for all $x\in \R^d$, 
\[
\phi(x) = \exp(h_y(x)) / \gamma,
\]
where
\[
\gamma = \gamma(y) = \int_{\R^d} \exp(h_y(x')) \dd x'.
\]
Then the following conditions hold:
\begin{description}
\item{(1)}
On input $y$ and $\delta$, algorithm ${\cal A}_1$ outputs a random vector $X\in \R^d$, 
distributed according to some probability distribution with density $\widetilde{\phi}$, such that 
\[
\|\widetilde{\phi} - \phi\|_1 = O(\delta),
\]
in time
$\poly(n, d, \log R, 1/\delta, \log(1/\tau))$,
with probability at least $1-\tau$.

\item{(2)}
On input $y$ and $\delta$, algorithm ${\cal A}_2$ outputs some $\gamma'>0$, such that 
\[
\gamma / (1+O(\delta)) \leq \gamma'\leq \gamma \cdot (1+O(\delta)),
\]
in time
$\poly(n, d, \log R, 1/\delta, \log(1/\tau))$,
with probability at least $1-\tau$.
\end{description}
\end{lemma}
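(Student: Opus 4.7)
The plan is to reduce both tasks to classical polynomial-time subroutines for convex bodies, namely approximate volume computation and near-uniform sampling via the ball walk or hit-and-run (Dyer--Frieze--Kannan, Lov\'asz--Simonovits, Kannan--Lov\'asz--Simonovits, Lov\'asz--Vempala). The enabling observations are: (a) the super-level sets $K_s \eqdef \{x \in S_n : h_y(x) \geq s\}$ are convex, since $h_y$ is concave on the convex polytope $S_n$; (b) membership in $S_n$ and evaluation of $h_y(x)$ (hence membership in each $K_s$) can be done in $\poly(n,d)$ time via the LP from Lemma~\ref{lem:h-subgradient-algo}; and (c) the assumption $y \in [-\ln R,0]^n$ forces $h_y(x) \in [-\ln R,0]$ on $S_n$, so only the range $s \in [-\ln R,0]$ is relevant. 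A standard affine preprocessing brings $S_n$ into near-isotropic position so that the mixing-time bounds of the walks apply with $\poly(d,\log R, 1/\delta)$ complexity per sample.

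For algorithm ${\cal A}_2$, I would discretize the layer-cake identity
\begin{equation*}
\gamma \;=\; \int_0^1 \vol\bigl(L_{\exp h_y}(t)\bigr)\,dt \;=\; \frac{\vol(S_n)}{R} + \int_{-\ln R}^{0} e^{s}\,\vol(K_s)\,ds
\end{equation*}
at thresholds $s_k = -\ln R + k\eta$, $k=0,\dots,K$, with $\eta = \Theta(\delta)$ and $K = O((\log R)/\delta)$. For each $k$ the KLS/LV volume algorithm returns a multiplicative $(1\pm \delta/K)$-approximation of $\vol(K_{s_k})$ in $\poly(d, K/\delta, \log(1/\tau'))$ time; setting $\tau' = \tau/(K+1)$ and combining with a union bound yields $\gamma' \in \gamma\cdot (1\pm O(\delta))$ with probability $1-\tau$. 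The error from replacing the integral by its Riemann sum is $O(\eta) = O(\delta)$ since the integrand varies by a factor $e^\eta = 1+O(\delta)$ across each subinterval.

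For algorithm ${\cal A}_1$, I would combine the same layer decomposition with rejection sampling. Using the volume estimates $\widehat{\vol}(K_{s_k})$ from ${\cal A}_2$, first sample a layer index $k$ with probability proportional to $e^{s_k}(e^\eta - 1)\cdot \widehat{\vol}(K_{s_k})$ (plus the base term $\vol(S_n)/R$); then draw $X \in K_{s_k}$ that is $\delta$-close to uniform using the LV sampler; finally accept $X$ with probability $e^{h_y(X) - s_{k+1}} \in [e^{-\eta},1]$. A direct computation shows that the output density $\widetilde\phi$ is $(1+O(\delta))$-multiplicatively close to $\phi$, hence $\|\widetilde\phi - \phi\|_1 = O(\delta)$. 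Since the acceptance probability is $\Theta(1)$, the expected number of retries per output sample is $O(1)$.

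The hard part will be the bookkeeping that composes the three sources of error (discretization, multiplicative volume error, and TV error of the per-body uniform sampler) into the final $O(\delta)$ bound, together with ensuring that the $K = O((\log R)/\delta)$ sequential calls to the sampler can each be warm-started in polynomial time. Both are handled by the standard nested-body trick: process $K_{s_0}\supseteq K_{s_1}\supseteq\cdots\supseteq K_{s_K}$ in order and use the samples from $K_{s_{k-1}}$ that fall into $K_{s_k}$ as warm starts for the $k$-th call. Because $\vol(K_{s_{k-1}})/\vol(K_{s_k}) \leq e^\eta = 1+O(\delta)$ times the error factors is $O(1)$, a constant fraction of the previous samples is reusable and the warm-start ratio stays polynomially bounded throughout, so the overall running time is $\poly(n,d,\log R,1/\delta,\log(1/\tau))$ as claimed.
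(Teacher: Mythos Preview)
Your approach is essentially the paper's: decompose $\phi$ via the superlevel sets of $h_y$, call Kannan--Lov\'asz--Simonovits volume approximation and near-uniform sampling on each, and combine via a mixture-plus-rejection. The only real difference is granularity: the paper uses $m=\lceil 1+\log_2 R\rceil$ dyadic thresholds, so its piecewise-constant majorant $G_y$ satisfies $H_y\le G_y\le 2H_y$ and it accepts with probability $H_y(X)/G_y(X)\in[1/2,1]$; you use $K=O((\log R)/\delta)$ thresholds so that the proposal is already $(1+O(\delta))$-close to $\phi$ pointwise. For ${\cal A}_2$ the paper writes $\gamma=M_{H_y}\,c\,\alpha$ and estimates the acceptance probability $\alpha$ by Monte Carlo, whereas you read $\gamma'$ off the Riemann sum directly. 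Both routes give the stated polynomial runtime.

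Two concrete slips. First, ``accept with probability $e^{h_y(X)-s_{k+1}}$'' with $k$ the \emph{sampled} layer can exceed $1$, since a point $X\sim\mathrm{Unif}(K_{s_k})$ may well have $h_y(X)>s_{k+1}$. The density your layer-mixture actually realises is $\widehat G(x)\propto e^{s_{j(x)+1}}$, where $s_{j(x)+1}$ is the least threshold strictly above $h_y(x)$; reject against \emph{that} quantity (giving acceptance probability $e^{h_y(X)-s_{j(X)+1}}\in[e^{-\eta},1]$), or with your fine grid simply skip rejection altogether. Second, the warm-start claim ``$\vol(K_{s_{k-1}})/\vol(K_{s_k})\le e^\eta$'' is false: for the cone $h_y(x)=-c\|x\|$ one has $\vol(K_s)\propto(-s)^d$, so the ratio is $(1+\eta/|s_k|)^d$, e.g.\ $2^d$ at $|s_k|=\eta$, and your ``constant fraction reusable'' argument breaks near the apex. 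The paper does not rely on any such ratio; it instead supplies a polynomial-time \emph{separation} oracle for each superlevel set (its Lemma~\ref{lem:sep}, via the dual of the LP~\eqref{eqn:packing-lp}), which is what lets one round each $K_s$ independently before invoking KLS. You should do the same, or appeal directly to the Lov\'asz--Vempala log-concave sampler, rather than assert a per-layer volume ratio that does not hold.
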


The proof of Lemma \ref{lem:sampling} is fairly technical and is presented in Section \ref{sec:sampling}.
Given Lemma \ref{lem:sampling}, we now have all the necessary ingredients to describe our main algorithm.
We will solve our constrained convex optimization problem~\eqref{eqn:mle-convex}
using the stochastic projected subgradient method. In particular, we will 
use the slight extension of Fact~\ref{thm:sgd} for the case of approximate stochastic
subgradients. By Claim~\ref{clm:lip}, it follows that we can set $M=2$.
By definition of the convex set $\mathcal{C}$, we have that 
$\mathrm{diam}(\mathcal{C}) \leq \sqrt{n} \cdot B = \tilde{O}(n^{3/2} d)$.
We need to approximate each stochastic subgradient within $\ell_2$-norm error
\begin{equation} \label{eqn:error-sampling}
\delta \eqdef \eps/(2\mathrm{diam}(\mathcal{C})) = \eps/\poly(n, d) \;,
\end{equation}
which we achieve as follows: First, we use the algorithm of Lemma~\ref{lem:sampling}
to compute a random sample $X$ from a distribution $\tilde{\D}_y$ 
of total variation distance at most $\delta$ from $\D_y$. 
We then output
$\partial_y h_y(X)$ by solving the LP \eqref{eqn:packing-lp} (see Lemma~\ref{lem:h-subgradient-algo}).

The pseudo-code of our algorithm is summarized in Algorithm \ref{alg:main}.

\begin{algorithm}[h!]
  \caption{Algorithm to Compute the Log-concave MLE on $\R^d$}
  \label{alg:main}
  \SetKwInOut{Input}{Input}
  \SetKwInOut{Output}{Output}
  \Input{Set of points $\{x^{(i)}\}_{i=1}^n$ on $\R^d$, accuracy parameter $0< \eps < 1$, and confidence parameter $0< \tau < 1$.}
  \Output{A vector $\widetilde{y} \in \R^n$ such that, with probability at least $1-\tau$, $\ell(H_{\widetilde{y}}) \geq \ell(\mle) -\eps$.}
   Use projected stochastic subgradient descent to find an $\eps/n$-approximate optimum of~\eqref{eqn:mle-convex} as follows:\\
   Initially, we set $y^{(1)} \leftarrow 0 \cdot \mathbf{1}^n$ (i.e.~the all-zeros vector).\\
   Let $\delta = \eps/(2\diam({\cal C}))$.\\
   Let $\tau'=\tau/3$.\\
   Let $K=\Omega(M\cdot \diam({\cal C})\log(1/\tau')/\eps^2)$.\\
   For $i=1$ to $K$:\\
   ~~~~~~Let $\phi^{(i)}(x) = \exp(h_{y^{(i)}}(x)) / \int_{\R^d} \exp(h_{y^{(i)}}(x')) \dd x'$, for all $x\in \R^d$.\\
   ~~~~~~Use Lemma \ref{lem:sampling} to sample a random vector $X^{(i)}$ that is distributed according to some $\widetilde{\phi}^{(i)}$, with 
   \[
   \|\widetilde{\phi}^{(i)} - \phi^{(i)}\|_1 < \delta/2,
   \]\\
~~~~~~with probability at least $1-\tau'/K$.\\
~~~~~~Use Lemma \ref{lem:h-subgradient-algo} to compute $z^{(i)}=\partial h_{y^{(i)}}(X^{(i)})$.\\
~~~~~~Set $\widetilde{g}^{(i)} \leftarrow -(1/n)\cdot \mathbf{1}^n + z^{(i)}$.\\
~~~~~~Set $y^{(i+1)} \leftarrow \pi_{{\cal C}}(y^{(i)} - a_i \widetilde{g}^{(i)})$.\\
End-For\\
Let $\bar{y} = (1/K) \sum_{i=1}^K y^{(i)}$.\\
Use Lemma \ref{lem:sampling} to compute some $\gamma'$ that approximates $\gamma$ within relative error $\eps/2$, where
\[
\gamma = \int_{\R^d} \exp(h_{\bar{y}}(x')) \dd x',
\]\\
with probability at least $1-\tau'$.\\
Set $\widetilde{y} \leftarrow \bar{y} - (\ln \gamma') \cdot \mathbf{1}^n$.\\
   \Return{$\widetilde{y}$}
   \end{algorithm}


\begin{proof}[Proof of Theorem~\ref{thm:main}]
It is immediate by the above discussion that Algorithm \ref{alg:main} correctly implements the stochastic projected subgradient method.
By our choice of parameters and Lemma~\ref{lem:correct-formulation}, it follows that 
$\bar{y}$ maximizes the log-likelihood within an additive $\eps/2$. Since the normalization constant $\gamma'$
is accurate within relative error $\eps/2$, it follows that the density corresponding to the normalization of 
$\exp(h_{\bar{y}})$ maximizes the log-likelihood within additive error $\eps$, as required. The efficient evaluation
oracle from our hypothesis follows from the fact that we can efficiently evaluate $h_y(x)$. The approximate sampler
from a close distribution follows directly from Lemma~\ref{lem:sampling}.


Let us now bound the total running time.
Each call to the sampling algorithm from Lemma~\ref{lem:sampling} takes 
time $\poly(n, d, \log R, 1/\delta, \log(1/\tau')) = \poly(n,d,1/\eps,\log(1/\tau))$.
Computing $z^{(1)}$ using Lemma \ref{lem:h-subgradient-algo} takes time $\poly(n,d)$.
Computing the projection $\pi_{\cal C}$ can be done in time $O(n)$, 
simply by iterating over all coordinates of a vector and rounding it to its nearest value in the interval $[-\log R,0]$.
Thus, each iteration of the main loop of Algorithm \ref{alg:main} takes time $\poly(n,d,1/\eps,\log(1/\tau))$.
Since there are $K=\poly(n,d,1/\eps,\log(1/\tau))$ iterations, it follows that the execution 
of the main loop takes time $\poly(n,d,1/\eps,\log(1/\tau))$.
Finally, the computation of $\gamma'$ using Lemma \ref{lem:sampling} takes time $\poly(n,d,1/\eps,\log(1/\tau))$.
We conclude that the total running time of the algorithm is $\poly(n,d,1/\eps,\log(1/\tau))$.

Finally, we bound the failure probability.
There are $K$ invocations of the sampling algorithm given by Lemma \ref{lem:sampling}.
Each one of them fails with probability at most $\tau'/K$.
By the union bound, they all succeed with probability at least $1-\tau'$.
The algorithm for computing $\gamma'$ succeeds with probability at least $1-\tau'$.
If all of these algorithms succeed, by Fact \ref{thm:sgd} and the subsequent discussion 
we have that the overall algorithm succeeds with probability at least $1-\tau'$.
It follows by the union bound that the overall success probability is at least $1-3\tau'=1-\tau$, as required. 
\end{proof}

\section{Efficient Sub-gradient Approximation via Sampling} \label{sec:sampling}

In this Section, we prove Lemma \ref{lem:sampling}, which gives an algorithm from sampling from the log-concave distribution computed by our algorithm.

The main algorithm used in the proof of Lemma \ref{lem:sampling} is concerned mainly with part (1) in its statement.
The pseudocode of this sampling procedure is given in Algorithm \ref{alg:sampling}.
Using the notation from Algorithm \ref{alg:sampling}, part (2) is easier to describe and we thus omit the pseudocode.
There are several subroutines that are used by Algorithm \ref{alg:sampling}: a procedure for approximating the volume of a convex body given by a membership oracle, and a procedure for sampling from the uniform distribution supported on such a body.
For these procedures we use the algorithms by Kannan, L\'{o}vasz and Simonovitz \cite{kannan1997random}, 
which are summarized in Theorems \ref{thm:KLS_volume} and \ref{thm:KLS_sampling} respectively.

\begin{algorithm}[h]
  \caption{Algorithm to sample from the normalized $H_y$}\label{alg:sampling}
  \label{alg:pc-sdp}
  \SetKwInOut{Input}{Input}
  \SetKwInOut{Output}{Output}
  \Input{Set of points $\{x^{(i)}\}_{i=1}^n$ on $\R^d$, vector $y\in [-\ln R, 0]^n$, parameter $0< \delta < 1$.}
  \Output{A random vector $X \in \R^d$ sampled from a probability distribution with density function $\widetilde{\phi}$, such that $\|\widetilde{\phi}-\phi\|_1 \leq \delta$, where $\phi(x)=\exp(h_y(x))/\int_{\R^d} \exp(h_y(x')) \dd x'$.}
Step 1. Let $m = \lceil 1 + \log_2R\rceil$. For any $i\in [m]$ let $L_i=\{x\in \R^d:H_y(x)\geq M_{H_y} 2^{-i}\}$.
 For any $i\in [m]$ compute an estimate $\widetilde{\vol}(L_i)$ of $\vol(L_i)$ such that 
  \[
  \vol(L_i)/(1+\delta)\leq \widetilde{\vol}(L_i) \leq \vol(L_i)(1+\delta).
  \]\\
Step 2. For any $i\in [m]$ let $u_i$ be the uniform probability distribution on $L_i$, and let $\widetilde{u}_i$ be an efficiently samplable probability distribution such that 
  \[
  \|\widetilde{u}_i - u_i\|_1 \leq \delta.
  \]\\
Step 3. Let $\widetilde{c} = \sum_{i=1}^m 2^{-i} \widetilde{\vol}(L_i) + 2^{-m} \widetilde{\vol}(L_m)$.\\
Step 4. Let $\widehat{D}$ be the probability distribution on $[m]$ with 
\[
\Pr_{I\sim \widetilde{D}} [I = i] = \left\{\begin{array}{ll}
\widetilde{\vol}(L_i) \cdot 2^{-i} / \widetilde{c} & \text{ if $i\in \{1,\ldots,m-1\}$}\\
2 \cdot \widetilde{\vol}(L_m) \cdot 2^{-m} / \widetilde{c} & \text{ if $i=m$}
\end{array}\right.
\]\\
Step 5. Sample $I\sim \widetilde{D}$.\\
Step 6. Sample $X \sim \widetilde{u}_I$.\\
Step 7. For any $x\in \R^d$ let
\[
G_y(x) = M_{H_f} 2^{-\lfloor \log_2(M_{H_y}/H_y(x)) \rfloor}
\]\\
Step 8. With probability $1-H_y(X)/G_y(X)$ go to Step 5.\\
\Return{X}
\end{algorithm}

\begin{theorem}[Kannan, Lov\'{a}sz and Simonovits \cite{kannan1997random}]
\label{thm:KLS_volume}
The volume of a convex body $K$ in $\R^d$, given by a membership oracle, can be approximated to within a relative error of $\delta$ with probability $1-\tau$ using
\[
d^5 \cdot \poly(\log d, 1/\delta, \log(1/\tau)) 
\]
oracle calls.
\end{theorem}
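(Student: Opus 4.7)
The plan is to use the multi-phase Monte Carlo method, reducing volume approximation to uniform sampling from a sequence of nested convex bodies. First I would preprocess the body $K$ to make it ``well-rounded''. Using the shallow-cut ellipsoid method with the membership oracle, I would compute an affine transformation $T$ so that $T(K)$ satisfies $B_d(0,1) \subseteq T(K) \subseteq B_d(0, r)$ for some $r = \poly(d)$; the Jacobian of $T$ is known, so approximating $\vol(T(K))$ suffices. This step costs $\poly(d)$ oracle calls via standard ellipsoid-based rounding.

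Next I would set up the telescoping decomposition. Define $K_i = T(K) \cap B_d(0, 2^{i/d})$ for $i = 0, 1, \ldots, m$ with $m = O(d \log r)$, so that $K_0$ is a unit ball (of known volume) and $K_m = T(K)$. The key identity is
\[
\vol(K_m) = \vol(K_0) \cdot \prod_{i=0}^{m-1} \frac{\vol(K_{i+1})}{\vol(K_i)}.
\]
Each ratio lies in $[1,2]$ by the choice of radii, so it can be estimated to relative error $\delta/m$ by drawing $N = O(m^2/\delta^2)$ approximately uniform samples from $K_{i+1}$ and counting the fraction that fall in $K_i$ (membership in $K_i$ is checked via one oracle call plus a norm comparison). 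A Chernoff bound together with a union bound over the $m$ phases and a final $\log(1/\tau)$ boosting yield the desired $(\delta,\tau)$ guarantee; the total number of samples is $\widetilde{O}(d/\delta^2)\cdot \log(1/\tau)$ times the per-sample oracle cost.

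The heart of the argument, and the main obstacle, is producing (nearly) uniform samples from each $K_{i+1}$ in polynomial time. I would use a geometric random walk, say the ball walk with step size $\rho = \Theta(1/\sqrt{d})$: from the current point $x\in K_{i+1}$, propose $y$ uniformly from $B_d(x,\rho)$; accept if $y\in K_{i+1}$, otherwise stay at $x$. To bound the mixing time I would follow the conductance framework: bound the conductance $\Phi$ of the walk from below using an isoperimetric inequality for convex bodies (the Lov\'asz--Simonovits localization lemma gives, for any measurable partition $K = S_1 \sqcup S_2 \sqcup S_3$ with $d(S_1,S_2)\ge t$, a bound $\vol(S_3) \ge (t/\diam(K))\cdot \min\{\vol(S_1),\vol(S_2)\}$), and then invoke the conductance-to-mixing-time bound $t_{\mathrm{mix}} = O(\Phi^{-2}\log(\vol/\eps))$. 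The step-size bound $\rho = \Theta(1/\sqrt d)$ combined with this isoperimetry yields mixing in $\widetilde{O}(d^2 r^2)$ steps from a warm start, where each step costs one oracle call.

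Finally, I would chain the samplers across phases: the uniform distribution on $K_i$ is an $O(1)$-warm start for the stationary distribution on $K_{i+1}$ (because $\vol(K_{i+1})/\vol(K_i)\le 2$), so samples produced in phase $i$ feed phase $i+1$ without the ``cold-start'' penalty. Multiplying the $\widetilde{O}(d/\delta^2)$ samples per phase, the $m = \widetilde{O}(d)$ phases, and the $\widetilde{O}(d^2)$ mixing time per sample (after rounding makes $r^2 = \poly(\log d)$ up to the isotropic transform) gives the claimed $d^5 \cdot \poly(\log d, 1/\delta, \log(1/\tau))$ oracle complexity. The delicate technical pieces — and the places I expect to spend most of the work — are (i) the localization-lemma proof of the isoperimetric inequality, which drives the $d^{-1}$ conductance bound, and (ii) controlling the error from the approximate (rather than exactly uniform) warm starts across phases, for which I would use an $L^2$ or KL-divergence decay argument along the walk.
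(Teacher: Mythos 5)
The paper does not prove this statement at all: Theorem~\ref{thm:KLS_volume} is imported verbatim as a black box from Kannan, Lov\'asz and Simonovits \cite{kannan1997random}, and the only thing the paper itself supplies is the interface needed to invoke it in its setting (evaluation of $H_y$ via Lemma~\ref{lem:h-subgradient-algo}, hence a membership oracle for the superlevel sets, plus the separation oracle of Lemma~\ref{lem:sep}). So what you have written is an attempted reconstruction of the KLS volume algorithm itself, something the paper never undertakes and does not need.

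Judged on its own terms, your outline follows the right general scheme (rounding, a telescoping product over nested bodies, ball-walk sampling, conductance via isoperimetry), but it has concrete gaps as a proof of the stated $d^5$ bound. First, the rounding step as described does not work: the shallow-cut ellipsoid method needs a (weak) separation oracle rather than bare membership (from membership one must go through a Yudin--Nemirovski-type reduction, at extra cost), and, more importantly, no affine map can achieve a sandwiching ratio with $r^2 = \poly(\log d)$ --- for a simplex the best possible $r$ grows polynomially in $d$, and the KLS notion of well-roundedness only places (most of) the body inside a ball of radius $O(\sqrt{d})$ around a unit ball; in \cite{kannan1997random} this rounding is itself performed by random-walk sampling and already consumes on the order of $d^5$ oracle calls. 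Second, your bookkeeping is inconsistent: you first charge $O(m^2/\delta^2)$ samples per ratio and later $\widetilde{O}(d/\delta^2)$ per phase, and your final product $\widetilde{O}(d/\delta^2)\cdot \widetilde{O}(d)\cdot \widetilde{O}(d^2)$ is $\widetilde{O}(d^4/\delta^2)$, not $d^5$; with the correct $r=\Theta(\sqrt{d})$ the naive ``$\Phi^{-2}$ from a warm start'' bound also does not suffice, because the ball walk's conductance can be destroyed by points very near the boundary, and the genuine $O^{\ast}(d^5)$ bound rests on handling this through $s$-conductance / the ``speedy walk'' amortization combined with isotropic rounding. These are exactly the pieces a complete proof would have to supply; for the purposes of this paper, however, citing \cite{kannan1997random} is the intended and sufficient route.
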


\begin{theorem}[Kannan, Lov\'{a}sz and Simonovits \cite{kannan1997random}]
\label{thm:KLS_sampling}
Given a convex body $K\subset \R^d$, with oracle access, and some $\delta>0$, we can generate a random point $u \in K$ that is distributed according to a distribution that is at most $\delta$ away from uniform in total variation distance, using 
\[
d^5 \cdot \poly(\log d, 1/\delta)
\]
oracle calls.
\end{theorem}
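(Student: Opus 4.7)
The plan is to realize an approximate uniform sampler on $K$ by simulating a rapidly mixing Markov chain whose stationary distribution is uniform on $K$, and then invoking a conductance-based mixing bound. A convenient choice of chain is the lazy \emph{ball walk}: from the current point $x\in K$, propose $y$ uniformly at random in the Euclidean ball $B(x,\rho)$ of radius $\rho=\Theta(1/\sqrt{d})$, query the membership oracle at $y$, move to $y$ if $y\in K$, and stay at $x$ otherwise. A detailed-balance computation shows that the uniform measure on $K$ is the unique stationary distribution, so it suffices to bound the mixing time in total variation and multiply by the per-step oracle cost (which is $O(1)$).

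Before running the walk, I would preprocess $K$ into approximately isotropic position: apply an affine map $T$ so that $B(0,1)\subseteq T(K)\subseteq B(0,\poly(d))$ and the covariance of the uniform distribution on $T(K)$ is within a constant factor of the identity. This is achieved by a bootstrap: run the ball walk on a coarsely rounded copy of $K$ to draw $\tilde{O}(d)$ approximate samples, use their empirical covariance (analyzed via matrix concentration) to produce a refined linear transformation, and repeat for $O(\log d)$ phases. Each phase warm-starts the next, and the total preprocessing cost is $\poly(d)$ oracle calls.

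Given a well-rounded $K$, I would lower bound the conductance $\Phi$ of the ball walk via two ingredients: (i) a geometric overlap estimate showing that if $x,x'\in K$ are both $\Omega(\rho)$-deep inside $K$ and satisfy $\|x-x'\|\leq \rho/\sqrt{d}$, then $\|P(x,\cdot)-P(x',\cdot)\|_{TV}\leq 1-\Omega(1)$; and (ii) the Lov\'asz--Simonovits isoperimetric inequality, which for any partition $K=S_1\sqcup S_2\sqcup S_3$ gives
\[
\vol(S_3)\geq \frac{2\,\mathrm{dist}(S_1,S_2)}{\diam(K)}\min\{\vol(S_1),\vol(S_2)\}.
\]
Combining these via the standard conductance calculation yields $\Phi=\Omega(1/d)$, and the Sinclair--Jerrum bound $t_{\mathrm{mix}}(\delta)=O(\Phi^{-2}\log(M_0/\delta))$ then gives $\tilde{O}(d^2\log(M_0/\delta))$ steps from a start whose density ratio with respect to uniform is at most $M_0$.

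The main obstacle will be controlling the warmness parameter $M_0$: a point-mass cold start yields $M_0$ exponential in $d$ and would spoil the bound. This is overcome by the same iterative rounding scheme, since the output distribution of each rounding phase is $O(1)$-warm for the next, and the final phase leaves an $O(1)$-warm start for the $\delta$-accurate chain. Collecting all factors --- $O(\log d)$ rounding phases of cost $\tilde{O}(d^3)$ oracle calls each, plus a final chain of $\tilde{O}(d^2)$ steps with $O(1)$ oracle calls apiece, multiplied by the $\poly(\log d, 1/\delta)$ overhead needed for the concentration estimates --- yields the claimed $d^5\cdot \poly(\log d, 1/\delta)$ oracle complexity.
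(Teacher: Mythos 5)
First, a point of orientation: the paper does not prove this statement at all --- it is quoted verbatim as a black-box result of Kannan, Lov\'asz and Simonovits \cite{kannan1997random}, and its only role here is to supply the volume and uniform-sampling subroutines used in Algorithm \ref{alg:sampling}. Your proposal is therefore an attempt to re-derive the KLS theorem itself, and at the level of strategy it does follow the known route (lazy ball walk with radius $\Theta(1/\sqrt{d})$, iterative rounding to near-isotropic position, Lov\'asz--Simonovits isoperimetry, conductance, warm starts). But as a proof it has genuine gaps. The most important one is the boundary/local-conductance issue: your overlap estimate is stated only for points that are $\Omega(\rho)$-deep inside $K$, yet after rounding a non-negligible part of the uniform measure sits at points whose local conductance (the probability that a proposed step stays in $K$) can be arbitrarily small --- near sharp corners it is exponentially small even with $\rho = \Theta(1/\sqrt{d})$. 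The conductance argument cannot simply ignore these points; this is exactly why KLS introduce the ``speedy walk'' (the walk conditioned on accepted moves) together with a correction step relating its stationary measure back to the uniform one. Without that device, or some substitute, the step from your items (i)--(ii) to ``$\Phi = \Omega(\cdot)$'' fails.

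Second, the quantitative claims do not hold together. With step size $\rho=\Theta(1/\sqrt d)$ and a well-rounded body of diameter $D=O(\sqrt d)$, the overlap-plus-isoperimetry calculation gives $\Phi = \Omega\bigl(\rho/(\sqrt{d}\,D)\bigr) = \Omega(d^{-3/2})$, not $\Omega(1/d)$, so the mixing time from a warm start is $\tilde{O}(d^{3})$ steps, not $\tilde{O}(d^{2})$. More seriously, your cost accounting for the rounding bootstrap ($O(\log d)$ phases of $\tilde{O}(d^{3})$ oracle calls each) is asserted rather than proved, and taken at face value it would yield a total of $\tilde{O}(d^{3})$ oracle calls --- which both contradicts your own final claim of $d^{5}\cdot\poly(\log d,1/\delta)$ and would improve on what was known at the time of \cite{kannan1997random}. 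In the actual KLS analysis the rounding/bootstrapping phase is the bottleneck and is where the $O^{*}(d^{5})$ arises (each phase needs enough approximate samples for the covariance estimate, each sample costs $\tilde{O}(d^{3})$ walk steps, and warmness must be propagated across phases); this is precisely the part your sketch glosses over. So the architecture is right, but the proposal as written neither handles the boundary behavior of the ball walk nor establishes the stated oracle complexity.
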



In order to use the algorithms in Theorems \ref{thm:KLS_volume} and \ref{thm:KLS_sampling} in our setting, we need a membership oracle for the superlevel sets of the function $H_y$.
Such an oracle can clearly be implemented using a procedure for evaluating $H_y$, 
which is given \new{by Lemma~\ref{lem:h-subgradient-algo}.}
\new{We also need a separation oracle for these superlevel sets, that is given in the following lemma:}

\begin{lemma}[Efficient Separation] \label{lem:sep}
There exists a $\poly(n, d)$ time separation oracle for the superlevel sets of $H_y(x) = \exp(h_y(x))$.
\end{lemma}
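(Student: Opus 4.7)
The plan is to exploit LP duality of the packing LP~\eqref{eqn:packing-lp} defining $h_y$, which expresses $h_y$ as a pointwise minimum of affine functions of $x$ and thus yields supporting hyperplanes for the superlevel sets of $h_y$. Since $H_y = \exp(h_y)$ is monotone in $h_y$, separation for $\{x : H_y(x) \geq t\}$ reduces to separation for $\{x : h_y(x) \geq c\}$ with $c = \ln t$.

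First, I would write down the dual of the LP in Lemma~\ref{lem:h-subgradient-algo}. With primal variables $a \in \R^n$ and constraints $\mathbf{1}^\top a = 1$, $\sum_i a_i x^{(i)} = x$, $a \geq 0$, the dual has variables $\mu \in \R$ and $\nu \in \R^d$ and reads
\[
\min_{\mu,\nu}\ \mu + \nu^\top x \quad \text{subject to}\quad \mu + \nu^\top x^{(i)} \geq y_i \text{ for all } i \in [n].
\]
For $x \in S_n$ the primal is feasible and bounded, so strong duality gives $h_y(x) = \min_{(\mu,\nu)\in D}(\mu + \nu^\top x)$, where $D$ is the dual feasible polyhedron (which does not depend on $x$). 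In particular, every $(\mu,\nu) \in D$ yields an affine upper bound $h_y(x) \leq \mu + \nu^\top x$ valid for all $x$ where the set is nonempty.

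Given a query point $x_0 \in \R^d$ and a target value $c$, I would then proceed in two cases. If $x_0 \notin S_n$, the primal LP is infeasible; a Farkas certificate produced together with the LP solve gives a hyperplane separating $x_0$ from $S_n$, hence from the superlevel set (which lies inside $S_n$). If $x_0 \in S_n$, I solve the LP to obtain $h_y(x_0)$ along with the dual optimizer $(\mu^\ast,\nu^\ast)$ satisfying $\mu^\ast + \nu^{\ast\top} x_0 = h_y(x_0)$. If $h_y(x_0) \geq c$ we report membership; otherwise, for every $x$ with $h_y(x) \geq c$ we have
\[
\nu^{\ast\top} x \;\geq\; h_y(x) - \mu^\ast \;\geq\; c - \mu^\ast \;>\; h_y(x_0) - \mu^\ast \;=\; \nu^{\ast\top} x_0,
\]
so $\{x : \nu^{\ast\top} x \geq c - \mu^\ast\}$ contains the superlevel set and strictly excludes $x_0$. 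This is exactly a separating hyperplane, and the whole procedure is a constant number of LP solves of size $\poly(n,d)$.

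The only subtlety I expect is the handling of the boundary case where $x_0 \in S_n$ but lies on the boundary of $S_n$ (so the primal is feasible but the dual optimum may be attained in the limit). This can be handled robustly by ensuring the LP solver returns an optimal dual vertex, which standard LP algorithms do; alternatively, one can reduce to the interior case by a tiny perturbation without affecting correctness of the separation oracle needed by the volume/sampling algorithms of \cite{kannan1997random}, since those algorithms only require a weak separation oracle.
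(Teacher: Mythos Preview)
Your proposal is correct and follows essentially the same approach as the paper: both take the dual (covering) LP of~\eqref{eqn:packing-lp}, observe that any dual-feasible $(\mu,\nu)$ gives an affine upper bound $h_y(x)\le \mu+\nu^\top x$ valid for all $x$, and use a dual solution with value below the threshold to produce the separating hyperplane. Your treatment is slightly more explicit about the infeasible case $x_0\notin S_n$ (the paper's argument implicitly covers it since the dual is then unbounded below), and your concern about the dual optimum being attained only in the limit is unnecessary for LPs---strong duality guarantees the dual optimum is attained whenever the primal is feasible and bounded.
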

\begin{proof}
To construct our separation oracle, we will rely on the covering LP that is dual to \eqref{eqn:packing-lp}.
The dual to \eqref{eqn:packing-lp} looks for the hyperplane that is above all the $(x^{(i)},y_i)$ 
that has minimal $y$ at $x$. More specifically, it is the following LP:
\begin{lp} \label{eqn:covering-lp}
\mini{\beta_0 + \sum_{j=1}^d \beta_j x_j}
\st \con{\beta \in \R^{d+1}, \beta_0 + \sum_{j=1}^d \beta_j x^{(i)}_j \geq y_i, i \in [n].}
\end{lp}
Now suppose that we are interested in a super level set $L_{H_y}(l)$. We can use the above LP 
to compute $h_y(x)$ (and thus $H_y(x)$) and check if it is in the superlevel set. 
Suppose that it is not, then there will be a solution $\beta \in \R^{d+1}$ 
whose value is below $ \ln l$, say $ \ln l-\delta$ for some $\delta>0$. Consider an $x' $ 
in the halfspace $\beta_0 + \sum_{j=1}^d \beta_j x'_j \leq \ln l - \delta/2$ which has $x$ in the interior. 
Since $x$ does not appear in the objective, $\beta$ is a feasible solution for the dual LP 
\eqref{eqn:covering-lp} with $y, x'$, and so $h_y(x') \leq \ln l-\delta/2$, which implies that 
$x'$  is not in the superlevel set. Therefore, $\beta_0 + \sum_j \beta_j x'_j = \ln l - \delta/2$ 
is a separating hyperplane for $x$ and the level set. 
This completes the proof of Lemma~\ref{lem:sep}.
\end{proof}

Given all of the above ingredients, we are now ready to proof the main result of this section.

\begin{proof}[Proof of Lemma \ref{lem:sampling}]
We first prove part (1) of the assertion.
To that end we analyze the sampling procedure described in Algorithm \ref{alg:sampling}.
Recall that 
$m=\lceil 1+\log_2 R\rceil$,
and for any $i\in [m]$, we define the superlevel set
\[
L_i = \{x\in \R^d : H_y(x) \geq M_{H_y} \cdot 2^{-i}\} \;.
\]
For any $x\in \R^d$ recall that 
\[
G_y(x) = M_{H_f} 2^{-\lfloor \log_2(M_{H_y}/H_y(x)) \rfloor} \;.
\]
For any $A\subseteq \R^d$, let $\chi_A:\R^d\to\{0,1\}$ be the indicator function for $A$.
It is immediate that for all $x\in \R^d$,
\begin{align*}
G_y(x) &= M_{H_y} \sum_{i=1}^{\infty} 2^{-i} \chi_{L_i}(x) \\
 &= M_{H_y} \sum_{i=1}^m 2^{-i} \chi_{L_i}(x) + 2^{-m} \chi_{L_i}(m) & \text{(since $H_y(x)=0$ for all $x\notin L_m$)} 
\end{align*}
Let
\[
c = \sum_{i=1}^m 2^{-i} \vol(L_i) + 2^{-m}  \vol(L_m).
\]
We have
\begin{align}
\int_{\R^d} G_y(x) \dd x &= M_{H_y} \left(\sum_{i=1}^m 2^{-i} \vol(L_i) + 2^{-m}  \vol(L_m)\right) = M_{H_y} c. \label{eq:Gy_integral}
\end{align}
Let 
\[
\widehat{G}_y(x) = G_y(x) / (M_{H_y} c).
\]
It follows by \eqref{eq:Gy_integral} that $\widehat{G}_y$ is a probability density function.

Let $D$ be the probability distribution on $\{1,\ldots,m\}$, where
\[
\Pr_{I\sim D}[I = i] = \left\{\begin{array}{ll}
\vol(L_i) \cdot 2^{-i} / c & \text{ if $i\in \{1,\ldots,m-1\}$}\\
2\cdot \vol(L_m) \cdot 2^{-m} / c & \text{ if $i=m$}
\end{array}\right.
\]
For any $i\in [m]$, let $u_i$ be the uniform probability density function on $L_i$.
To sample from $\widehat{G}_y$, we can first sample $I\sim D$, and then sample $X\sim u_I$.


Let $\widehat{H}_y:\R^d \to \R_{\geq 0}$ be the probability density function obtained by normalizing $H_y$; 
that is, for all $x\in \R^d$ let
\[
\widehat{H}_y(x) = H_y(x) / c',
\]
where 
\[
c' = \int_{\R^d} H_y(x) \dd x.
\]
Consider the following random experiment:
first sample $X\sim \widehat{G}_y$, and then accept with probability $H_y(x)/G_y(x)$; conditioning on accepting, the resulting random variable $X\in \R^d$ is distributed according to $\widehat{H}_y$.
Note that since for all $x\in \R^d$, $G_y(x)/2 \leq H_y(x) \leq G_y(x)$, it follows that we always accept with probability at least $1/2$.
Let $\alpha$ be the probability of accepting.
Then 
\begin{align*}
\alpha &= \int_{\R^d} \widehat{G}_y(x) (H_y(x)/G_y(x))  \dd x,
\end{align*}
and thus 
\begin{align}
\int_{\R^d} H_y(x) \dd x &= \int_{\R^d} G_y(x) (H_y(x)/G_y(x)) \dd x  \notag \\
 &= M_{H_y} c \int_{\R^d} \widehat{G}_y(x) (H_y(x)/G_y(x)) \dd x \notag \\ 
 &= M_{H_y} c \alpha \label{eq:mass_Hy} \;.
\end{align}
By Theorem \ref{thm:KLS_volume}, for each $i\in [m]$, we compute an estimate, $\widetilde{\vol}(L_i)$, to  $\vol(L_i)$, to within relative error $\delta$, using $\poly(d, 1/\delta, \log(1/\tau'))$ oracle calls, with probability at least $\tau'$, where $\tau'=\tau/n^b$, for some constant $b>0$ to be determined;
moreover, by Theorem \ref{thm:KLS_sampling}, we can efficiently sample, 
using $\poly(d, 1/\delta)$ oracle calls, from a probability distribution $\widetilde{u}_i$ 
with $\|u_i - \widetilde{u}_i\| \leq \delta$. 
Each of these oracle calls is a membership query in some superlevel set of $H_y$.
This membership query can clearly be implemented 
if we can compute that value $H_y$ at the desired query point $x$, 
which can be done in time $\poly(n,d)$ using lemma~\ref{lem:h-subgradient-algo}.
Thus, each oracle call takes time $\poly(n,d)$.
Let
\begin{align}
\widetilde{c} &= \sum_{i=1}^m 2^{-i} \widetilde{\vol}(L_i) + 2^{-m} \widetilde{\vol}(L_m). \label{eq:c_tilde}
\end{align}
Since for all $i\in [m]$, $\vol(L_i)/(1+\delta)\leq \widetilde{\vol}(L_i) \leq \vol(L_i) (1+\delta)$, it is immediate that 
\[
c / (1+\delta) \leq \widetilde{c}\leq c (1+\delta) \;.
\]
Recall that Algorithm \ref{alg:sampling} uses the probability distribution $\widetilde{D}$ on $[m]$, where
\[
\Pr_{I\sim \widetilde{D}} [I = i] = \left\{\begin{array}{ll}
\widetilde{\vol}(L_i) \cdot 2^{-i} / \widetilde{c} & \text{ if $i\in \{1,\ldots,m-1\}$}\\
2 \cdot \widetilde{\vol}(L_m) \cdot 2^{-m} / \widetilde{c} & \text{ if $i=m$}
\end{array}\right.
\]
Consider the following random experiment, which corresponds to Steps 5--6 of Algorithm \ref{alg:sampling}:
We first sample $I\sim \widetilde{D}$, and then we sample $X\sim \widetilde{u}_I$.
The resulting random vector $X\in \R^d$ is  distributed according to 
\[
 \widetilde{G}_y(x) = \frac{1}{\widetilde{c}} \left ( \sum_{i=1}^m 2^{-i} \widetilde{\vol}(L_i) \widetilde{u}_i(x) + 2^{-m}  \widetilde{\vol}(L_m) \widetilde{u}_m(x) \right).
\]

Next, consider the following random experiment, which captures Steps 5--8 of Algorithm \ref{alg:sampling}:
We sample $X\sim \widetilde{G}_y$, and we accept with probability $H_y(X)/G_y(X)$.
Let $\widetilde{H}_y$ be the resulting probability density function supported on $\R^d$ obtained by conditioning the above random experiment on accepting.
Let $\widetilde{\alpha}$ be the acceptance probability.
We have
\[
 \widetilde{\alpha} =\int_{\R^d} (H_y(x)/G_y(x)) \widetilde{G}(x) \dd x.
\]

We have
\begin{align*}
\|D_i - \widetilde{D}_i\|_1 &= \sum_{i=1}^{m-1} 2^{-i} \cdot \left|\frac{\vol(L_i)}{c}-\frac{\widetilde{\vol}(L_i)}{\widetilde{c}}\right| + 2 \cdot 2^{-m} \cdot \left|\frac{\vol(L_m)}{c}-\frac{\widetilde{\vol}(L_m)}{\widetilde{c}}\right|\\
 &= \sum_{i=1}^{m-1} 2^{-i} \cdot \left|\frac{\vol(L_i)}{c}-\frac{\vol(L_i)(1+\delta)}{c/(1+\delta)}\right| + 2 \cdot 2^{-m} \cdot \left|\frac{\vol(L_m)}{c}-\frac{\vol(L_m)(1+\delta)}{c/(1+\delta)}\right|\\
 &\leq \sum_{i=1}^{m-1} 2^{-i} \frac{\vol(L_i)}{c} 3\delta + 2 \cdot 2^{m} \frac{\vol(L_m)}{c} 3\delta \\
 &= 3\delta.
\end{align*}
It follows that
\begin{align*}
\|\widehat{G}_y - \widetilde{G}_y\|_1 &\leq \|D_i - \widetilde{D}_i\| + \max_i \| u_i-\widetilde{u}_i\|_1 \leq 3\delta + \delta \leq 4 \delta,
\end{align*}
and so
\[
|\alpha-\widetilde{\alpha}| \leq \int_{\R^d} \frac{H_y(x)}{G_y(x)} \left|\widehat{G}_y(x) - \widetilde{G}_y(x)\right| \dd x \leq \int_{\R^d} \left|\widehat{G}_y(x) - \widetilde{G}_y(x)\right| \dd x \leq \|\widehat{G}_y - \widetilde{G}_y \|_1 \leq 4 \delta.
\]
Note that $\widehat{H}_y(x)/\alpha = \widehat{G}_y(x) (H_y(x)/G_y(x) )$ and $\widetilde{H}_y(x)/\widetilde{\alpha} = \widetilde{G}_y(x) (H_y(x)/G_y(x) )$
and so
\begin{align*}
\|\widetilde{H}_y - \widehat{H}_y\|_1 &\leq \alpha \left(\|\widetilde{H}_y/\alpha - \widehat{H}_y/\alpha\|_1 + \|\widehat{H}_y/\widetilde{\alpha} - \widehat{H}_y/\alpha\|_1\right) & \text{(by the triangle inequality)}\\
 &= \alpha \left(\|\widetilde{H}_y/\alpha - \widehat{H}_y/\alpha \|_1 + |1/\widetilde{\alpha} - 1/\alpha|\right) \\
 &= \alpha \int_{\R^d} (H_y(x)/G_y(x)) | \widetilde{G}_y(x) - \widehat{G}_y(x) | + |\alpha - \widetilde{\alpha}|/\widetilde{\alpha} \\
 &\leq \|\widehat{G}_y - \widetilde{G}_y \|_1 + 2 |\alpha - \widetilde{\alpha}| \\
 &\leq 12 \delta,
\end{align*}
which establishes that the random vector $X$ that Algorithm \ref{alg:sampling} 
outputs is distributed according to a probability distribution 
$\widetilde{\phi}$ such that $\|\widetilde{\phi}-\phi\|_1\leq 10\delta$, as required. 

In order to bound the running time, we observe that all the steps of the algorithm 
can be implemented in time $\poly(n, \new{d}, \log R, 1/\delta, \log(1/\tau))$. The most expensive 
operation is approximating the volume of an superlevel set $L_i$ and sampling for $L_i$, 
using Theorems \ref{thm:KLS_volume} and \ref{thm:KLS_sampling}.
By the above discussion, using Lemma \ref{lem:h-subgradient-algo} 
each of these operations can be implemented in time $\poly(n, \new{d}, 1/\delta,\log(1/\tau))$.
The algorithm suceeds if all the invocations of the algorithm of Theorem \ref{thm:KLS_volume} 
are successful; by the union bound, this happens with probability 
at least $1-\tau'\cdot\poly(n) = 1-\tau' n^b \poly(n) \geq 1-\tau$, 
where the inequality follows by choosing some sufficiently large constant $b>0$.
This establishes part (1) of the lemma.

It remains to prove part (2).
By \eqref{eq:mass_Hy} we have that $\gamma = M_{H_y} c \alpha$.
Algorithm ${\cal A}_2$ proceeds as follows.
First, we compute $M_{H_y}$.
By the convexity of $h_y$, it follows that the maximum value of $M_{H_y}$ 
is attained on some sample point $x_i$; that is, 
$M_{H_y} = \max_{i\in [n]} H_y(x_i)$.
Since we can evaluate $H_y$ in polynomial time using 
Lemma \ref{lem:h-subgradient-algo}, it follows that we can also compute $M_{H_y}$ in polynomial time.
Next, we compute $\widetilde{c}$ using formula \ref{eq:c_tilde}.
Arguing as in part (1), this can be done in time $\poly(n, 1/\delta, \log(1/\tau))$, and with probability at least $1-\tau/2$.
Finally, we estimate $\widetilde{\alpha}$.
The value of $\widetilde{\alpha}$ is precisely the acceptance probability of the random experiment described in Steps 5--8 of Algorithm \ref{alg:sampling}.
Since $\alpha\geq 1/2$, and $|\alpha-\widetilde{\alpha}| \leq 4\delta$, it follows that for $\delta<1/16$, we can compute an estimate $\bar{\alpha}$ of the value of $\widetilde{\alpha}$, to within error $1+O(\delta)$, 
with probability at least $1-\tau/2$, after $O(\log(1/\tau))$ repetitions of the random experiment.
The output of algorithm ${\cal A}_2$ is $\gamma'=M_{H_y} \widetilde{c} \bar{\alpha}$.
We obtain that, with probability at least $1-\tau$, we have
\[
\gamma' = M_{H_y} \widetilde{c} \bar{\alpha}  \leq M_{H_y} c (1+\delta) \alpha (1+O(\delta))  = \gamma (1+O(\delta)) \;,
\]
and
\[
\gamma' = M_{H_y} \widetilde{c} \bar{\alpha}  \geq M_{H_y} (c / (1+\delta)) ( \alpha /(1+O(\delta)) ) = \gamma / (1+O(\delta)) \;,
\]
which concludes the proof.
\end{proof}

\section{Conclusions} \label{sec:conc}
In this paper, we gave a $\poly(n, d, 1/\eps)$ time 
algorithm to compute an $\eps$-approximation of the log-concave
MLE based on $n$ points in $\R^d$. Ours is the first algorithm
for this problem with a sub-exponential dependence in the dimension $d$.
We hope that our approach may lead to more practical methods for computing
the log-concave MLE in higher dimensions than was previously possible.

From a theoretical standpoint, an immediate open question is whether there exists
an algorithm for our problem running in time 
$\poly(n, d, \log(1/\eps))$. Since such an algorithm would seem to require going beyond first order methods, 
and the sample size $n$ is typically large (as a function of $1/\eps$), 
it is unclear whether a method with such an asymptotic runtime could be practically viable. An intriguing direction would 
be to explore the complexity of the log-concave MLE under natural restrictions to log-concavity. 
It would also be interesting to understand the complexity of the MLE for mixtures of two
log-concave distributions.
 
 \paragraph{Acknowledgements.} We are grateful to Jelena Diakonikolas for sharing
her expertise in optimization.


\bibliographystyle{alpha}

\bibliography{allrefs}

\appendix

\end{document}